\documentclass[11pt]{article}

\usepackage{amsmath}
\usepackage{amsfonts}
\usepackage{amssymb}
\usepackage{amsthm}
\usepackage{enumerate}
\usepackage{bbm}
\usepackage{hyperref}
\usepackage{dsfont}
\usepackage{algorithmicx,algpseudocode,algorithm}
\usepackage{array}
\usepackage{fullpage}
\usepackage{utopia}

\setlength{\parskip}{1em}

\newtheorem{theorem}{Theorem}[section]
\newtheorem{lemma}[theorem]{Lemma}

\newtheorem{claim}[theorem]{Claim}
\newtheorem{proposition}[theorem]{Proposition}

\newcommand{\suchthat}{\;\ifnum\currentgrouptype=16 \middle\fi|\;}

\newcommand{\F}{\mathbb{F}}

\newcommand*{\defeq}{\stackrel{\text{def}}{=}}

\newcommand*{\horzbar}{\rule[.5ex]{2.5ex}{0.5pt}}

\begin{document}

\title{Syndrome decoding of Reed-Muller codes and tensor decomposition over finite fields}
\author{
Swastik Kopparty\thanks{Department of Mathematics\& Department of Computer Science, Rutgers University.
Research supported in part by a Sloan Fellowship, NSF grants CCF-1253886 and CCF-1540634, and BSF grant 2014359.
{\tt swastik.kopparty@gmail.com}.} \and
Aditya Potukuchi\thanks{Department of Computer Science, Rutgers University. {\tt aditya.potukuchi@cs.rutgers.edu}.}}
\maketitle

\begin{abstract}
Reed-Muller codes are some of the oldest and most widely studied error-correcting codes, of interest for both their algebraic structure as well as their many algorithmic properties. A recent beautiful result of Saptharishi, Shpilka and Volk \cite{SSV:15} showed that for binary Reed-Muller codes of length $n$ and distance $d = O(1)$, one can correct $\operatorname{polylog}(n)$ random errors in $\operatorname{poly}(n)$ time (which is well beyond the worst-case error tolerance of $O(1)$).

In this paper, we consider the problem of {\em syndrome decoding} Reed-Muller codes from random errors. More specifically, given the $\operatorname{polylog}(n)$-bit long syndrome vector of a codeword corrupted in $\operatorname{polylog}(n)$ random coordinates, we would like to compute the locations of the codeword corruptions. This problem turns out to be equivalent to a basic question about computing tensor decomposition of random low-rank tensors over finite fields.

Our main result is that syndrome decoding of Reed-Muller codes (and the equivalent tensor decomposition problem) can be solved efficiently, i.e., in $\operatorname{polylog}(n)$ time. We give two algorithms for this problem:

\begin{enumerate}

\item The first algorithm is a finite field variant of a classical algorithm for tensor decomposition over real numbers due to Jennrich. This also gives an alternate proof for the main result of~\cite{SSV:15}.
\item The second algorithm is obtained by implementing the steps of~\cite{SSV:15}'s Berlekamp-Welch-style decoding algorithm in sublinear-time. The main new ingredient is an algorithm for solving certain kinds of systems of polynomial equations.
\end{enumerate}
\end{abstract}

\section{Introduction.}

Reed-Muller codes are some of the oldest and most widely studied error-correcting codes, of interest for both their algebraic structure as well as their many algorithmic properties.
A recent beautiful result of Saptharishi, Shpilka and Volk~\cite{SSV:15} (building on Abbe, Shpilka and Wigderson~\cite{ASW:15}) showed that for binary Reed-Muller codes of length
$n$ and distance $d = O(1)$, there is a $\operatorname{poly}(n)$-time algorithm\footnote{In fact, the algorithm of~\cite{SSV:15} runs in near linear time $n \operatorname{poly}\log(n)$.} that can correct $\operatorname{poly}\log(n)$ random errors (which is well beyond the worst-case error tolerance of $d/2 = O(1)$ errors).
In this paper, we show that the $\operatorname{poly}\log(n)$ random error locations can in fact be computed
in \underline{$\operatorname{poly}\log(n)$ time} given the {\em syndrome vector} of the received word. 
In particular, our main result shows that there is a $\operatorname{poly}(n)$-time, $\operatorname{poly}\log(n)$-space algorithm
that can compute the error-locations.\footnote{This algorithm is in fact a one-pass streaming algorithm
which spends $\operatorname{poly}\log(n)$-time per coordinate as it scans the received word, and at the end of the pass it computes the error-locations
in time  $\operatorname{poly}\log(n)$.}

Syndrome decoding of Reed-Muller codes turns out to be equivalent to a basic problem about tensor decompositions over finite fields.
We give two algorithms for  our main result, one coming from the Reed-Muller code world (and based on~\cite{SSV:15}), and another
coming from the tensor-decomposition world (and based on algorithms for tensor decompositions over the real numbers).

\subsection{Problem setup}
A error-correcting code is simply a subset $C \subseteq \F_2^n$.
We say the code $C$ has minimum distance $\geq d$ if for any distinct $c_1,c_2 \in C$, the Hamming distance $\Delta(c_1,c_2) \geq d$.
The main nontrivial algorithmic task associated with an error-correcting code $C$ is {\em decoding}: for a codeword
$c$ and a sparse error-vector $e$, if we are given the ``received word" $y = c+ e$, we would like to compute the original
codeword $c$.

A {\em linear code} $C$ is a code which is also an $\F_2$-linear subspace of $\F_2^n$.
Let $k$ denote the dimension of the code, and let $k' = n - k$.
Linear codes are usually specified either by giving a {\em generating matrix} $G$ (whose rows span $C$)
or an $k' \times n$ {\em parity-check matrix} $H$ (whose rows span the orthogonal space $C^{\perp}$).
Given a received word $y = c+ e$, where $c$ is a codeword and $e$ is a sparse vector,
the {\em syndrome} of $y$ is simply the vector $S \in \F_2^{k'}$ given by:
$$S = H\cdot y = H\cdot (c+e) = 0 + H\cdot e = H\cdot e.$$
Observe that the syndrome can easily be computed from the received word.
An important fact here is that the syndrome is exclusively a function of $e$, and does not depend on $c$.
Given the syndrome $S= H \cdot y$ (where $y = c+e$ for a codeword $c$ and a sparse error vector $e$), the algorithmic
problem of {\em syndrome decoding} is to compute the error vector $e$. Clearly, a syndrome decoding algorithm
can also be used for standard decoding: given a received word $y$ we can compute the syndrome $H\cdot y$, and
then apply a syndrome decoding algorithm to it.

Reed-Muller codes are algebraic error-correcting codes based on polynomial evaluation \linebreak \cite{REED:54,MULLER:54}.
Here we focus on Reed-Muller codes over $\F_2$ with constant distance (although our results apply to larger fields and larger distances too).
Let $m$ be a large integer, and let $r = O(1)$ be an integer.
Associated to these parameters, the Reed-Muller code $RM(m,m-r)$ is defined as follows.
The coordinates of the code correspond to the points of $\F_2^m$ (and thus the length $n = 2^m$.
To each polynomial $P(X_1, \ldots, X_m)$ of individual degree $\leq 1$ and total degree $\leq m-r$,
we associate a codword in $RM(m,m - r)$: this codeword is given by evaluations of $P$ at all the points
of $\F_2^m$. This code has codimension $\Theta(m^{r}) = \Theta((\log n)^r)$ and
 minimum distance $d = 2^{r} = \Theta(1)$.

Decoding algorithms for Reed-Muller codes have a long history.
It has been known for a long time that one can decode from $d/2$ worst case errors in polynomial time (recall that $d$ is the distance of the code).
There has been much work on decoding these codes under random errors~\cite{DUMER:17} and the local testing,
local decoding and local list-decoding 
of these codes~\cite{BLR:93,RubinfeldSudan:96,GoldreichLevin:89,AroraSudan:03,STV:01,AKKLR:05,BKSSZ:10}.

A recent beautiful and surprising result of Saptharishi, Shpilka and Volk      \cite{SSV:15} (building on Shpilka Abbe, Shpilka and Wigderson~\cite{ASW:15}, Kumar and Pfister~\cite{KP:15}, and Kudekar et.al. \linebreak \cite{KMSU:15}) gave new insights into the error-correction capabilities of Reed-Muller codes under random errors.
In the constant distance regime, their results showed that the above Reed-Muller codes $RM(m, m-r)$ (with codimension $\Theta \left((\log n)^r \right)$ and
distance $O(1)$ can in fact be decoded in $\operatorname{poly}(n)$ time from $\Theta \left((\log n)^{\lfloor(r-1)/2 \rfloor} \right)$ random errors with high probability (which is well beyond the
worst-case error-correction radius of $O(1)$). 

Our main result is a syndrome decoding version of the above.

\medskip \noindent{\bf Theorem A (Informal)\ :}\ \ {\em Let $c \in RM(m, m - r)$ be an arbitrary codeword, and let
$e \in \F_2^n$ be a uniformly random string with Hamming weight at most $o\left((\log n)^{\lfloor (r-1)/2\rfloor}\right)$.
There is a deterministic $(\log n)^{O(r)}$ time algorithm, which when given
the syndrome $S = H\cdot e$, computes the set of nonzero coordinates of $e$ (with high probability over
the choice of $e$).}
\medskip

As an immediate corollary, there is a streaming algorithm for computing the error-locations in the above setting,
which makes one pass over $y$, uses only $\operatorname{poly}\log(n)$ space, and spends only $\operatorname{poly}\log(n)$ time per coordinate.
Indeed, the syndrome $H \cdot y$ (where $H$ is parity check matrix of Reed-Muller codes) can be easily computed 
in one pass over $y$ (using the $\operatorname{poly}\log(n)$ space and $\operatorname{poly}\log(n)$ time per coordinate),
after which the syndrome decoding algorithm of Theorem A can compute the nonzero coordinates of $e$.

\subsection{Techniques}

We give two proofs of our main result. The first goes via a connection
to the problem of tensor-decomposition of random low-rank tensors over finite fields. We give an efficient algorithm for this tensor-decomposition problem, by
adapting a known algorithm (due to Jennrich) for the analogous problem over the real numbers. The second goes via the original approach of~\cite{SSV:15}, which
is a novel variant of the Berlekamp-Welch decoding algorithm for Reed-Solomon codes. We show how to implement their steps in a compact form; an important technical
step in this is a new algorithm to solve certain systems of polynomial equations, using ideas related to the Valiant-Vazirani isolation lemma.

\subsubsection{Approach via tensor decomposition}
It will be useful to understand how a parity-check matrix $H$ of the Reed-Muller code $RM(m, m - 2r - 2)$ looks.
Recall that $H$ is a $k' \times n$ matrix (where $k'$ is the codimension
of $RM(m,m - 2r - 2)$ in $\F_2^n$). The rows of $H$ are indexed by elements of $\F_2^m$, and for $x \in \F_2^m$,
the $x$-column of $H$ turns out to (essentially) equal $x^{\otimes \leq 2r + 1}$, the $\leq 2r+1$'th tensor powers of $x$.
Thus for a random sparse vector $e$ whose nonzero coordinates are $E \subseteq \F_2^m$, the syndrome $S = H \cdot e$ ends up equalling:
$$ S = \sum_{e \in E} e^{\otimes \leq 2r + 1}.$$

Having written the problem in this way, the problem of computing the error locations $E$ from the syndrome $S$ is basically just the problem of
tensor decomposition of an appropriately constructed random low rank tensor over $\F_2$.

We show how this problem can be solved efficiently. We adapt an elegant algorithm of Jennrich for this task over the real numbers.
This algorithm is based on taking two random flattenings of the tensor $S$ into matrices, using properties of the pseudoinverse (a.k.a.
the Moore-Penrose generalized inverse) of a singular matrix, and spectral ideas.
Two difficulties show up over finite fields. The more serious one is that the Moore-Penrose generalized inverse does not exist in general
over finite fields~\cite{ginversesbook} (and even in our special situation). We overcome this by developing an alternate algorithm that does not use the
pseudoinverse of a singular matrix, but instead keeps track of a full rank minor of the singular matrix.
The other difficulty is that small finite fields do not have enough elements in them for a matrix to have all distinct
eigenvalues in the field. We overcome this by moving to a large enough extension field $\F_{2^{10m}}$.

Finally we note that this gives a new proof of the main theorem of~\cite{SSV:15}.
The details appear in Section~\ref{sec:jennrich}. There we also briefly sketch how to derandomize this algorithm.

\subsubsection{Approach via solving polynomial equations}
The original approach of~\cite{SSV:15} works as follows. Given the received word $y \in \F_2^n$,
we view it as a function from $F_2^m \to \F_2$. We then look for all polynomials $A(X_1, \ldots, X_m), B(X_1, \ldots, X_m)$
of degree at most $r+1$, $m-r - 1$ respectively, such that for all $x \in \F_2^m$:
$$A(x) \cdot y(x) = B(x).$$
\cite{SSV:15} suggested to consider the linear space $V$ of {\em all} $A(X_1, \ldots, X_m)$ for which there
exists such a $B(X_1, \ldots, X_m)$\footnote{This idea of considering {\em all} solutions of this
``error-locating equation" instead of just one solution is the radical new twist over the Berlekamp-Welch 
algorithm that makes~\cite{SSV:15} so powerful.}. The main property they show is that 
for $E$ is completely characterized by $V$; namely, $E$ is precisely the set of common zeroes of all the elements of $V$.
Then~\cite{SSV:15} simply check for each point $x \in \F_2^m$ whether it is a common zero of all elements of $V$.

Our syndrome decoder tries to do the same, in $\operatorname{poly}(m)$ time instead of $\operatorname{poly}(2^m)$ time, using only the syndrome.
We begin by observing that a basis for the space $V$ can be found given only the syndrome of $y$.
This reduces us to the problem of finding the common zeroes of the collection of polynomials in $V$.

In full generality, given a collection of low degree polynomial finding their common solutions is NP-hard.
Indeed, this very easily encodes SAT. However our situation is different in a subtle but important way.
It turns out that $V$ is the space of {\em all} low degree polynomials that vanish on $E$. So we are not solving
an arbitrary system of polynomial equations! The next theorem says that such systems of polynomial equations are
solvable efficiently.

\medskip
\noindent{\bf Theorem B (Informal):}\ \ {\em 
Let $E \subseteq \F_2^m$ be a uniformly random subset of size $o(m^r)$.
Let $V$ be the space of all polynomials of degree at most $r+1$ which vanish on $E$.
There is a deterministic polynomial time algorithm that, when given a basis for $V$ as input, computes $E$ (with
high probability over the choice of $E$).
}
\medskip

Our algorithm for this problem uses ideas related to the Valiant-Vazirani isolation lemma (which reduces
SAT to Unique-SAT). If $E$ turned out to be of size exactly $1$, it turns out that there is a very
simple way to read off the element of $E$ from $V$. 
We show how to reduce the general case to this case: by choosing a random affine subspace $G$ of a suitable
small codimension $c$, we can ensure that $|E \cap G| = 1$. It also turns out that when $E$ is random,
given the space of all $m$-variate polynomials of degree at most $r+1$ vanishing on $E$, we can compute the space of all
$m-c$-variate polynomials (viewing $G$ as $\F_2^{m-c}$) of degree at most $r+1$ vanishing on $G \cap E$. This lets
us reduce to the case of a unique solution, and we can recover an element of $E$. Repeating this several
times gives us all elements of $E$.

We also give a different algorithm for Theorem B using similar ideas, which has the advantage of being deterministic. The
key subroutine for this algorithm is that given an affine subspace $H \subseteq \F_2^m$, we can compute the size of 
$E \cap H$ from $V$ (for this it is important that $E$ is random). This subroutine then easily allows us to zoom in on the elements of
$E$.

\section{Notation.}
\begin{itemize}
\item We say that $a = b \pm c$ to mean $a \in [b-c, b+c]$.
\item We use $\omega$ to denote the exponent of matrix multiplication.
\item For a matrix $M_{m \times n}$, and subsets $A \subseteq [m]$ and $B \subseteq [n]$, we say $M_{A,B}$ to mean to submatrix of $M$ with rows and columns indexed by elements in $A$ and $B$ respectively. Further, $M_{A,\cdot} \defeq M_{A,[n]}$, and $M_{\cdot, B} \defeq M_{[m],B}$.
\item We use $\mathcal{M}_r^n$ to denote the set of all monomials of degree $\leq r$ in $n$ variables $X_1,\ldots, X_n$.
\item For a vector $v \in \F_2^m$, let us write $v^{\otimes \leq t}$ to mean the vector of length $\binom{m}{\leq t}$, whose entries are indexed by the monomials in $\mathcal{M}_t^m$. The entry corresponding to $M \in \mathcal{M}_r^m$ is given by $M(v)$.
\item For a set of points $A \subseteq \F_2^n$, we use $A^{\otimes \leq t} \defeq \{v^{\otimes \leq t} \suchthat v \in A\}$.
\item A set of points $A \subseteq \F_2^n$ is said to satisfy property $U_r$ if the vectors in $A^{\otimes \leq r}$ are linearly independent.
\item For a set $A\subseteq \F_2^t$, we denote $\operatorname{mat}(A)$ to be the $|A| \times t$ matrix whose rows are elements of $A$.
\end{itemize}

\section{The Main Result.}

The main result is that we show how to decode high-rate Reed Muller codes $RM(m, m-2r - 2)$, where we think of $r$ as growing very slowly compared to $m$, say, a constant. In this case, the received corrupted codeword is of length $n = 2^m$. However, \emph{syndrome} of this code word is $O(m^{2r})$. We want to find the set of error locations from the syndrome itself \emph{efficiently}. Formally, we prove the following:

\begin{theorem}
\label{thm:main}
Let $E$ be a set of points in $\F_2^m$ that satisfy property $U_r$. There is a randomized algorithm that takes as input, the syndrome of an $RM(m, m - 2r - 2)$ codeword corrupted at points in $E$, and returns the list of error locations $E$ with probability $> .99$. This algorithm runs in time $O(m^{\omega r+4})$.
\end{theorem}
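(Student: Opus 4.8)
The plan is to translate the syndrome into a structured low-rank $3$-tensor over $\F_2$ and recover its rank-one components by a finite-field adaptation of Jennrich's algorithm. Since the $x$-th column of the parity-check matrix of $RM(m,m-2r-2)$ is essentially $x^{\otimes \le 2r+1}$, the syndrome of a codeword corrupted on $E$ is the vector $S=\sum_{e\in E}e^{\otimes \le 2r+1}$, indexed by $\mathcal{M}_{2r+1}^m$. First I would re-index $S$ as a $3$-tensor $T$ of shape $\binom{m}{\le r}\times\binom{m}{\le r}\times(m+1)$ by putting in position $(M_1,M_2,i)$, with $M_1,M_2\in\mathcal{M}_r^m$ and $X_0:=1$, the entry $S_{M_1M_2X_i}$; this is a legal index because $\deg(M_1M_2X_i)\le 2r+1$, and the reshaping is a copy-with-repetition costing $O(m^{2r+1})$. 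Using $(M_1M_2X_i)(e)=M_1(e)M_2(e)X_i(e)$ one checks that
\[
T \;=\; \sum_{e\in E} u_e\otimes u_e\otimes \tilde w_e,\qquad u_e:=e^{\otimes\le r},\ \ \tilde w_e:=(1,e_1,\dots,e_m).
\]
Property $U_r$ says precisely that the matrix $U$ with columns $(u_e)_{e\in E}$ has full column rank $|E|$, so this is a nondegenerate decomposition in which the first two factor matrices coincide; it therefore suffices to recover the $u_e$ up to scaling, since the ratio of the $X_i$-entry of $u_e$ to its $1$-entry is $e_i$ (and the $u_e$, hence the $\tilde w_e$, are distinct).

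The recovery step follows Jennrich, carried out over the large extension field $\K:=\F_{2^{10m}}$ to have room for genericity. I would pick random $a,b\in\K^{m+1}$ and form the two third-mode flattenings $M_a:=\sum_{e\in E}\langle a,\tilde w_e\rangle\,u_eu_e^{\mathsf T}=U D_a U^{\mathsf T}$ and $M_b:=U D_b U^{\mathsf T}$, where $D_a,D_b$ are diagonal with entries the corresponding inner products; over $\K$ these are invertible with high probability, so each of $M_a,M_b$ has rank exactly $|E|$ and $\Ima(M_a)=\Ima(M_b)=\Ima(U)$. Over $\R$ Jennrich would diagonalize $M_aM_b^{+}$; over $\F_2$ the pseudoinverse need not exist, so instead I would find by Gaussian elimination index sets $A,B$ with $|A|=|B|=|E|$ and $(M_b)_{A,B}$ invertible. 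Such a full-rank minor always exists, and because $(M_b)_{A,B}=U_{A,\cdot}\,D_b\,(U_{B,\cdot})^{\mathsf T}$ is a product of square matrices, its invertibility forces $U_{A,\cdot}$ and $U_{B,\cdot}$ to be invertible too. Then
\[
R \;:=\; (M_a)_{A,B}\bigl((M_b)_{A,B}\bigr)^{-1} \;=\; U_{A,\cdot}\,(D_aD_b^{-1})\,(U_{A,\cdot})^{-1},
\]
so the $|E|\times|E|$ matrix $R$ is conjugate to the diagonal matrix $D_aD_b^{-1}$, whose entries $\langle a,\tilde w_e\rangle/\langle b,\tilde w_e\rangle$ are pairwise distinct with high probability over $a,b$. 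Hence $R$ is diagonalizable with one-dimensional eigenspaces and its eigenvectors are exactly the columns $(u_e)_A$ of $U_{A,\cdot}$; computing the eigenvalues (the characteristic polynomial of $R$ splits into distinct linear factors over $\K$) and solving $(R-\lambda I)v=0$ for each recovers every $(u_e)_A$ up to scaling.

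To finish, I would lift each $(u_e)_A$ to the full vector $u_e$: compute a basis of $\Ima(U)=\Ima(M_a)$, note that restriction to the coordinates in $A$ is a bijection from $\Ima(U)$ onto $\K^{|E|}$ (as $U_{A,\cdot}$ is invertible), apply its inverse, then normalize by the $1$-coordinate and read off $e=(e_1,\dots,e_m)$. Since the $|E|$ distinct eigenvalues yield $|E|$ distinct points, the output is exactly $E$, and the only randomness is the choice of $(a,b)$. For the running time, every linear-algebra operation is on matrices of size at most $\binom{m}{\le r}=O(m^r)$ over $\K$, costing $O(m^{\omega r})$ field operations; the small $|E|\times|E|$ eigen-decomposition and root-finding over $\F_{2^{10m}}$ cost $\operatorname{poly}(m)$ since $|E|=o(m^r)$; and each field operation in $\F_{2^{10m}}$ costs $\operatorname{poly}(m)$. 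Altogether this is $O(m^{\omega r+4})$.

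I expect the two real obstacles to be exactly the finite-field pathologies flagged in the techniques overview. The first is the non-existence of the Moore--Penrose pseudoinverse over $\F_2$; this is handled by the full-rank-minor identity for $R$ above, which is essentially the statement that a well-chosen invertible minor of the singular matrix $M_b$ can be substituted for $M_b^{+}$. The second is that $\F_2$ (and $\F_{2^{O(r)}}$) is too small to guarantee distinct eigenvalues, which is why the whole computation is pushed up to $\F_{2^{10m}}$. The remaining work is a sequence of Schwartz--Zippel/union-bound estimates showing that a uniformly random $(a,b)\in(\K^{m+1})^2$ simultaneously makes all $\langle b,\tilde w_e\rangle$ nonzero, all the ratios $\langle a,\tilde w_e\rangle/\langle b,\tilde w_e\rangle$ distinct, and hence $M_a,M_b$ of rank $|E|$; this succeeds with probability $1-O(|E|^2)/|\K|$, far above $0.99$ since $|\K|$ is exponentially larger than the $O(|E|^2)=\operatorname{polylog}(n)$ events in question.
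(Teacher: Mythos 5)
Your proposal is correct and follows essentially the same route as the paper's first proof (Section~\ref{sec:jennrich}): reshape the syndrome into the $3$-tensor $\sum_{e\in E} e^{\otimes\le r}\otimes e^{\otimes\le r}\otimes e^{\otimes\le 1}$, take random contractions over $\F_{2^{10m}}$, substitute a full-rank $|E|\times|E|$ minor for the (generally nonexistent) pseudoinverse, diagonalize, and lift the recovered eigenvector columns back to $\Ima(U)$ before normalizing by the constant coordinate. The only cosmetic difference is the lifting step, where you invert the restriction map $\Ima(M_a)\to\K^{|E|}$ while the paper solves $X_L\,x[i]^{\mathsf T}=S_{i,L,1}^{\mathsf T}$ against the slice $S_{\cdot,\cdot,1}=XX^{\mathsf T}$; these are equivalent.
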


Our first proof of this theorem is via the `Tensor Decomposition Problem' over small finite fields. As the name suggests, this is just the finite field analogue of the well-studied Tensor Decomposition problem (see, for example, \cite{TENSOR}). The problem is (equivalently) stated as follows: Vectors $e_1, \ldots, e_t$ are picked uniformly and independently from $\F_2^m$. We are given access to
\[
\sum_{i \in [t]}e_i^{\otimes \leq 2r + 1},
\]
and the goal is to recover $e_i$'s. The fact that the $e_i$'s are picked randomly is extremely important, as otherwise, the $e_i$'s can be picked so that the decomposition is not unique. We rely on the results from \cite{ASW:15}, \cite{KMSU:15} and \cite{KP:15}, which informally state that the Reed-Muller codes achieve capacity in the Binary Erasure Channel (BEC) in the very high rate regime, entire constant rate regime, and the very low rate regime. More precisely, for $RM(m,d)$ when the degree $d$ of the polynomials is $o(m)$, $m/2 \pm O(\sqrt{m})$, $m - o(\sqrt{m/\log m})$. This means that when a set of points are picked independently with probability $p$, where $p = 1 - R - \epsilon$, where $R$ is the rate of the code, and $\epsilon$ is a small constant, these points satisfy property $U_r$ with high probability for this range of $R$.

Since this is a tensor decomposition problem, one natural approach is to try and adapt existing tensor decomposition algorithms. Assuming only that the $e_i^{\otimes \leq r}$'s are linearly independent, we show how to decompose $\sum_{i \in [t]}e_i^{\otimes \leq 2r + 1}$. Indeed, this is a very well studied problem in the machine learning community, and one can adapt existing techniques with a bit of extra work. The advantage of this approach is the simplicity and its ability to give the proof the main result of \cite{SSV:15} by giving an efficient algorithm. 

Our second approach to solving this problem in finite fields is to reduce it to finding the common zeroes of a space of low degree polynomials, which we then proceed to solve. This algorithm goes via an interesting and natural algebraic route involving solving systems of polynomial equations. The running time of the resulting algorithm has a worse dependence on the field size that the first approach.
We note here that this alsogives a new approach to tensor decomposition, using ideas related to the Berlekamp-Welch algorithm.

\section{Approach using Jennrich's Algorithm.}
\label{sec:jennrich}

The key idea is that, we will look the vector $v^{\otimes \leq 2r+1}$ as a $3$-tensor $v^{\otimes \leq r} \otimes v^{\otimes \leq r} \otimes v^{\otimes \leq 1}$. Indeed, given the syndrome $\sum_{i \in [t]}e_i^{\otimes \leq 2r+1}$, one can easily construct the $3$-dimensional tensor $\sum_{i \in [t]}e_i^{\otimes \leq r} \otimes e_i^{\otimes \leq r} \otimes e_i^{\otimes \leq 1}$, so we may assume that we are given the tensor. This allows to use techniques inspired by existing tensor decomposition algorithms \cite{HAR:70,LRA:93} like Jennrich's Algorithm (see \cite{BLUM}). To our best knowledge, this problem has not been previously studied over finite fields. Although we state the result for codes over $\F_2$, it is not hard to see that the proof works almost verbatim over larger fields.

\subsection{An overview and analysis of the algorithm.}

We first restate the problem:

\begin{itemize}
\item[] \emph{Input:} For a set of vectors $E = \{e_1,\ldots, e_t\} \subset \F_2^m$ that satisfy property $U_r$, we are given the syndrome as a $3$-tensor
\[
S = \sum_{i \in [t]}e_i^{\leq r} \otimes e_i^{\otimes \leq r} \otimes e_i^{\leq 1}.
\]
\item[] \emph{Output:} Recover the $e_i$'s
\end{itemize}

Following in the footsteps of Jennrich's Algorithm, we pick random points in $a$ and $b$ in $\F_{2^{10m}}^{m+1}$ and compute the matrices

\[
S^a \defeq \sum_{i \in [t]}\langle a,e_i^{\otimes \leq 1} \rangle e_i^{\otimes \leq r}\otimes e_i^{\otimes \leq r},
\]

and

\[
S^b \defeq \sum_{i \in [t]}\langle b,e_i^{\otimes \leq 1} \rangle e_i^{\otimes \leq r}\otimes e_i^{\otimes \leq r}.
\]

	Computing these matrices is the same as taking the weighted linear combination of the slices of the tensor $T$ along one of its axes. Define the $t \times \binom{m}{\leq r}$ matrix $X \defeq \operatorname{mat}(E^{\otimes\leq r})^T$, so we have the matrices $S^a = XAX^T$, and $S^b = XBX^T$ for diagonal matrices $A$ and $B$ respectively. The $i$'th diagonal entry of $A$ is given by $a_i \defeq \langle a ,e_i^{\otimes \leq 1} \rangle$, and the $i$'th diagonal entry of $B$ is given by $\langle b, e_i^{\otimes \leq 1} \rangle$. Let $K$ and $L$ be two (not necessarily distinct) maximal linearly independent sets of $t (= |E|)$ rows in $X$. Denote $X_{K} \defeq X_{K,\cdot}$, and $X_{L} \defeq X_{L,\cdot}$ as shorthand. We have that $S^a_{K,L} \defeq X_{K}AX_{L}^T$, and $S^b_{K,L} \defeq X_{K}BX_{L}^T$ are full rank, since the diagonal entries of $A$ and $B$ are all distinct and nonzero. Therefore, we have the inverse $(S^b_{K,L})^{-1} = (X_{L}^T)^{-1}B^{-1}X_{K}^{-1}$. Multiplying with $S^a_{L}$, we have:

\begin{align*}
S^a_{K,L}(S^b_{K,L})^{-1} & = X_{K}AX_{L}^T(X_{L}^T)^{-1}B^{-1}X_{K}^{-1} \\
& = X_{K}(AB^{-1})X_{K}^{-1}.
\end{align*}

In order to carry out the operations over an extension field, we need to pick an irreducible polynomial of appropriate degree over $\F_2$. Fortunately, this can also be done in time $\operatorname{poly}(m)$. The reason that $a$, and $b$ are chosen from a large extension field is that it ensures that all the entries of $AB^{-1}$ are also nonzero and distinct w.h.p. So, the columns of $X_K$ are just the eigenvectors of this matrix, which we will then proceed to compute. In order to compute the eigenvalues, we need to factor the characteristic polynomial. Here one can use Berlekamp's factoring algorithm \cite{BERLEKAMP:67}. So, we require the following two lemmas:

\begin{lemma}
\label{lem:diagonal}
For nonzero and distinct $x_1,\ldots,x_t \in \F_2^m$, and a uniformly chosen $a$ and $b$ from $\F_{2^{10m}}^{m+1}$, denote $a_i \defeq \{ \langle a, x_i^{\otimes \leq 1} \rangle$, and $b_i \defeq \langle a, x_i^{\otimes \leq 1} \rangle \big \}$. Then we have that w.h.p, 
\begin{itemize}
\item[$(1)$] $a_1,\ldots,a_t,b_1,\ldots, b_t$ are all distinct and nonzero.
\item[$(2)$] $\not \exists i,j \in [t]$ such that $i \neq j$ and $a_ib_i^{-1} = a_jb_j^{-1}$.
\end{itemize}
\end{lemma}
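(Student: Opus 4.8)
The plan is a standard Schwartz–Zippel / union-bound argument, exploiting the fact that each quantity we need to control is a nonzero low-degree polynomial in the coordinates of $a$ and $b$, evaluated over the huge field $\F_{2^{10m}}$. First observe that for fixed nonzero $x_i \in \F_2^m$, the map $a \mapsto \langle a, x_i^{\otimes \leq 1}\rangle$ is a nonzero $\F_2$-linear (hence degree-$1$) polynomial in the $m+1$ coordinates of $a$: it is nonzero because $x_i^{\otimes \leq 1}$ has a $1$ in its degree-$0$ slot, so the linear form is not identically zero. Likewise $a_i - a_j = \langle a, x_i^{\otimes \leq 1} - x_j^{\otimes\leq 1}\rangle$ is a nonzero linear form precisely when $x_i \neq x_j$ (the constant terms cancel, but some other coordinate differs). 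So each event in item $(1)$ — some $a_i = 0$, some $b_i = 0$, some $a_i = a_j$, some $b_i = b_j$, or (if we want full mutual distinctness) some $a_i = b_j$ — is the vanishing of a fixed nonzero polynomial of degree $1$ over $\F_{2^{10m}}$, which by Schwartz–Zippel happens with probability at most $2^{-10m}$. There are fewer than $3t^2 \leq 3\cdot 4^m$ such events (using $t \leq |\F_2^m| = 2^m$), so a union bound kills all of them with probability $1 - O(4^m \cdot 2^{-10m}) = 1 - o(1)$.

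For item $(2)$, condition on the event from item $(1)$, so that all $a_i, b_i$ are nonzero; then $a_i b_i^{-1} = a_j b_j^{-1}$ is equivalent to the polynomial identity $a_i b_j - a_j b_i = 0$, i.e. $\langle a, x_i^{\otimes\leq 1}\rangle\langle b, x_j^{\otimes\leq 1}\rangle - \langle a, x_j^{\otimes\leq 1}\rangle\langle b, x_i^{\otimes\leq 1}\rangle = 0$. Viewed as a polynomial in the combined $2(m+1)$ variables $(a,b)$, this has total degree $2$. The one thing that genuinely needs checking is that this polynomial is \emph{not identically zero} for $i \neq j$: since $x_i \neq x_j$, the vectors $x_i^{\otimes \leq 1}$ and $x_j^{\otimes\leq 1}$ are linearly independent over $\F_{2^{10m}}$ (they already have a coordinate where they differ, and a coordinate — the constant slot — where they agree, so neither is a scalar multiple of the other over any field), and one checks that $\langle a, u\rangle\langle b,v\rangle - \langle a,v\rangle\langle b,u\rangle$ is the bilinear form given by the $2\times 2$ minor and vanishes identically iff $u,v$ are parallel. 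Hence for each of the $\binom{t}{2} < 4^m$ pairs, Schwartz–Zippel gives failure probability at most $2 \cdot 2^{-10m}$, and another union bound finishes the argument, with the total failure probability of both items still $o(1)$ (in fact exponentially small in $m$).

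The only mild subtlety — and the step I'd be most careful about — is the non-identical-vanishing claims, since "w.h.p." is vacuous if the polynomial is zero: one must use that the $x_i$ are \emph{distinct} and that each $x_i^{\otimes\leq 1}$ has a fixed nonzero entry (the degree-$0$ coordinate), which is exactly what makes each linear form nonzero and each pair of these vectors non-proportional. Everything else is routine: bound the number of bad events by $\mathrm{poly}(2^m)$, bound each bad event's probability by $O(2^{-10m})$ via Schwartz–Zippel over $\F_{2^{10m}}$, and take a union bound; the slack $2^{-10m}$ versus $2^{2m}$ events is deliberately generous so that "$10m$" can be replaced by any sufficiently large constant multiple of $m$.
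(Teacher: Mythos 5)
Your proposal is correct and follows essentially the same route as the paper's proof: reduce each bad event to the vanishing of a fixed nonzero low-degree polynomial in $(a,b)$ over $\F_{2^{10m}}$, bound each such event by (roughly) $2^{-10m}$, and union-bound over polynomially many pairs. The paper argues the degree-$2$ case in part $(2)$ by fixing $a$ and all but one coordinate of $b$ (there is then a unique completion), whereas you invoke Schwartz--Zippel directly and justify non-vanishing via linear independence of $x_i^{\otimes\leq 1}$ and $x_j^{\otimes\leq 1}$; these are interchangeable. Your write-up is if anything a bit more explicit (e.g.\ spelling out the $a_i = b_j$ events and the exact reason the bilinear form $\langle a,u\rangle\langle b,v\rangle - \langle a,v\rangle\langle b,u\rangle$ is not identically zero), but the underlying argument is the same.
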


\begin{proof}
For the proof of $(1)$, we just need to say that there is no subset $S \subseteq [n]$ such that $\langle a, \mathds{1}_S \rangle = 0$, or equivalently, there are no nontrivial linearly dependencies in the entries of $a$. Since we picked $a$ and $b$ from a vector space over a large enough field, there are at most $2^{2(m+1)}$ possibilities for nontrivial linear dependencies, and each occurs with probability $\frac{1}{2^{10m}}$. Therefore, there are nontrivial linear dependencies with probability at most $2^{-7m}$.

To prove $(2)$, first fix $i$ and $j$. W.L.O.G, let $k$ be a coordinate where $x_i^{\otimes \leq 1}$ is $1$ and $x_j^{\otimes \leq 1}$ is zero. Fixing $a$, and all but the $k$'th coordinate of $b$ we see that there is exactly one $b[k]$ such that $a_ib_i^{-1} = a_jb_j^{-1}$. Therefore, with $a$, and $b$ picked uniformly, this equation is satisfied with probability at most $\frac{1}{2^{10m}}$. Therefore, there are $i$ and $j$ that satisfy this equality with probability at most $\frac{m^2}{2^{10m}}$.

Therefore, by union bound, both $(1)$, and $(2)$ are satisfied with probability at least $1 - 2^{-6m}$.

\end{proof}

\begin{lemma}
With $X$, $A$, $B$ as given above, the only eigenvectors of $X_KAB^{-1}X_K^{-1}$ are the columns of $X_K$ with probability at least $1 - 2^{-7m}$.
\end{lemma}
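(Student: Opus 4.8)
The plan is to observe that $M := X_K A B^{-1} X_K^{-1}$ is, by construction, similar to the diagonal matrix $D := A B^{-1}$ via the invertible matrix $X_K$, and then to invoke the elementary fact that a diagonal matrix with pairwise distinct diagonal entries has exactly the coordinate lines as its eigenspaces. First I would record why $X_K$ is invertible: since $E$ satisfies property $U_r$, the $\binom{m}{\leq r}\times t$ matrix $X = \operatorname{mat}(E^{\otimes\leq r})^T$ has rank $t$, so every maximal linearly independent set $K$ of its rows has size exactly $t$, and $X_K$ is a $t\times t$ invertible matrix over $\F_2$. Consequently $v\mapsto X_K^{-1}v$ is a bijection of $\F_{2^{10m}}^t$ that carries the $\lambda$-eigenvectors of $D$ onto the $\lambda$-eigenvectors of $M$ (namely $Mv=\lambda v \iff D(X_K^{-1}v)=\lambda(X_K^{-1}v)$), and it sends the standard basis vector $\mathbf{e}_i$ to the $i$-th column of $X_K$. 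Thus the whole statement reduces to: with the claimed probability, the only eigenvectors of $D$ are the nonzero scalar multiples of $\mathbf{e}_1,\dots,\mathbf{e}_t$.

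Next I would condition on the event, guaranteed by Lemma~\ref{lem:diagonal} with probability at least $1-2^{-7m}$ (using only its part $(2)$ together with the fact that the $b_i$ are nonzero, which keeps $D=AB^{-1}$ well-defined), that the diagonal entries $d_i := a_i b_i^{-1}$ of $D$ are pairwise distinct. Given this, if $Dv=\lambda v$ then coordinatewise $d_i v_i = \lambda v_i$, so $v_i = 0$ whenever $d_i \neq \lambda$; since the $d_i$ are distinct there is at most one index $i$ with $d_i = \lambda$, so $v$ is supported on a single coordinate, i.e. $v = c\,\mathbf{e}_i$ with $c\neq 0$. Pushing this back through the conjugation shows every eigenvector of $M$ is a nonzero scalar multiple of a column of $X_K$, and conversely the $i$-th column of $X_K$ is an eigenvector of $M$ with eigenvalue $d_i$; this is exactly the claim.

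I do not expect a genuine obstacle here --- the argument is a conjugation plus a one-line fact about diagonal matrices --- but three points warrant care. (i) Invertibility of $X_K$ must be justified, which is precisely where property $U_r$ enters. (ii) It is essential to work over the extension field $\F_{2^{10m}}$: there the $d_i$ all lie in one field, the characteristic polynomial $\prod_i (x-d_i)$ of $D$ splits, and each eigenspace is one-dimensional. Over a small field two of the $d_i$ could be forced to coincide, in which case some eigenspace of $M$ would be a plane containing vectors that are not scalar multiples of any single column of $X_K$ --- so the randomness of $a,b$ (through Lemma~\ref{lem:diagonal}) is doing real work and is not a mere technicality. (iii) The phrase ``the only eigenvectors of $M$ are the columns of $X_K$'' should be read as ``the eigenspaces of $M$ are exactly the lines spanned by the columns of $X_K$''; I would state it that way to preempt the trivial objection about scalar multiples.
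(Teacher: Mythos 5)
Your proof is correct and follows essentially the same route as the paper: conjugation by $X_K$ reduces the claim to the fact that a diagonal matrix with pairwise distinct diagonal entries (guaranteed by Lemma~\ref{lem:diagonal}) has exactly the coordinate lines as eigenspaces. You spell out the conjugation bijection and the coordinate-wise argument more carefully than the paper does, and your remark about reading ``the only eigenvectors'' up to scalar multiples is a fair clarification, but the substance is identical.
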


\begin{proof}
Indeed, the columns of $X_K$ are eigenvectors of $X_KAB^{-1}X_K^{-1}$ since it is easy to verify that $(X_KAB^{-1}X_K^{-1})X_K = X_K(AB^{-1})$. Moreover, by Lemma~\ref{lem:diagonal}, with probability at least $1 - 2^{-7n}$, the matrix $AB^{-1}$ has distinct nonzero diagonal entries. Therefore, all the eigenvalues are distinct, and so no other vector in the span of the columns if $X_K$ is an eigenvector.

\end{proof}

\textbf{Remark:} In the traditional Jennrich's Algorithm, after defining the matrices $S^a$, and $S^b$, one usually works with the \emph{pseudo-inverse} of $S^b$. It turns out that a necessary condition for the pseudo inverse to exist is that $\operatorname{rank}(S^b) = \operatorname{rank}(S^b(S^b)^T) = \operatorname{rank}((S^b)^TS^b)$. However, this need not be the case for us, in fact, one can have that $X$ is full rank, yet, $X^TX = 0$, which gives $S^b(S^b)^T = 0$, while $S_b$ still has full rank. Hence, we needed to find a full rank square submatrix of $X$ and use it to determine the rest of $X$. 

To recover the rest of $X$, we make use of the entries $(S_{l,i,1})_{l \in L}$ of $S$. We may assume that the entries of $e_i^{\otimes \leq 1}$ are labelled such that $e_i^{\otimes \leq 1}[1] = 1$, and so $S_{\cdot,\cdot,1} = XX^T$. Now suppose we want to recover row $i$, we set up a system of linear equation in the variables $x[i] = (x[i,1], \ldots, x[i,t])$:

\[
X_L\cdot x[i]^T = S_{i,L,1}^T.
\]

This can be solved since $X_L$ is full rank, and therefore, is invertible.

\subsection{The algorithm and running time.}

Given the analysis above, we can now state the algorithm:

\noindent
\begingroup
\small
\begin{algorithm}
\label{alg:ffjennrich}
\begin{algorithmic}
\\
\Procedure{JennrichFF}{$S$} 
\State $a,b \sim \F_{2^{10m}}^{m+1}$
\State $S^a \gets \sum_{i \in [n+1]}S_{\cdot,\cdot,i}a[i]$
\State $S^b \gets \sum_{i \in [n+1]}S_{\cdot,\cdot,i}b[i]$
\State $K,L \gets$ indices of the largest full rank submatrix of $S^a$
\State $v_1,\ldots, v_t \gets$ eigenvectors of $S^a(S^b)^{-1}$
\State $X_{K} \gets (v_1,\ldots, v_t)$
\State \textbf{initialize} matrix $X$
\For{$1 \leq i \leq {m \choose \leq r} $}
\State $X_{i,\cdot}^T \gets X_{K}^{-1}S_{K,i,1}^T$
\EndFor
\State \textbf{return} $X$
\EndProcedure
\end{algorithmic}
\end{algorithm}
\endgroup

There are several steps in this algorithm. We will state the running time of each step
\begin{itemize}
\item[0.] Constructing the $3$-dimensional tensor from the syndrome takes time $O(m^{2r+3})$
\item[1.] Finding an irreducible polynomial of degree $10m$ takes time $O(m^4)$ (see, for example, \cite{SHOUP94}). This is for constructing $\F_{2^{10m}}^{m+1}$.
\item[2.] Constructing $S^a$, and $S^b$ takes time $O\left( \binom{m}{\leq 2r+1}\right)$. 
\item[3.] Computing $K$,$L$ takes time $O\left( \binom{m}{\leq r}^{\omega} \right)$. 
\item[4.] Inverting $X_K$ takes time at most $O\left(t^{\omega} \right)$. 
\item[5.] Recovering $X$ takes time $O\left(t^2 \binom{m}{\leq r} \right)$. In fact, recovering just the relevant coordinates of $X$ takes time just $O(t^2m)$. 
\item[6.] Factoring degree $\binom{m}{\leq r}$ polynomials over $\F_{2^{10m}}$ takes $O(m^{r+4})$ time. This is required for computing eigenvectors.
\end{itemize}

Therefore, the whole algorithm runs in time $O(m^{\omega r +1})$, compared to the input size of $O(m^{2r})$.

\subsection{A note on derandomization.}

We needed to pick $a$ and $b$ at random to ensure that all the $a_i$'s and $b_i$'s  satisfy the conditions given in Lemma~\ref{lem:diagonal}. In order to ensure this deterministically, set the vectors of polynomials 
\begin{align*}
a(\alpha) & = (1,\alpha, \alpha^2,\ldots, \alpha^m) \\
b(\alpha) & = (\alpha^{3m},\alpha^{3m+2},\ldots, \alpha^{5m})
\end{align*}

For any $x_i, x_j \in \F_2^{m + 1}\setminus \{0\}$, where $x_i \neq x_j$, it is easy to see that the polynomials 

\begin{flalign*}
\langle a(\alpha),x_i \rangle, \\
\langle b(\alpha), x_i \rangle, \\
\langle a(\alpha),x_i \rangle - \langle a(\alpha),x_j \rangle,  \\
\langle a(\alpha),x_i \rangle - \langle b(\alpha),x_j \rangle,  \\
\langle b(\alpha),x_i \rangle - \langle b(\alpha),x_j \rangle, & \text{\hspace{5mm} and} \\
\langle a(\alpha),x_i \rangle\langle b(\alpha),x_j \rangle - \langle a(\alpha),x_j \rangle\langle b(\alpha),x_i \rangle \\
\end{flalign*}

are all nonzero polynomials (see Claim~\ref{claim:lastpoly}) of degree at most $6m$ in $\alpha$. Taking $\alpha$ to be some primitive element of the field $\F_{2^{10m}}$ ensures that is it not a root of any of the above polynomials. Such an element can also be efficiently and deterministically found (see~\cite{SHOUP94}).

\begin{claim}
\label{claim:lastpoly}
For two distinct nonzero elements $x_i. x_j \in \F_2^{m+1}$, the polynomial 
\[
P(\alpha) = \langle a(\alpha),x_i \rangle\langle b(\alpha),x_j \rangle - \langle a(\alpha),x_j \rangle\langle b(\alpha),x_i \rangle
\]
is nonzero.
\end{claim}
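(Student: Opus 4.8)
The plan is to reduce the whole claim to the fact that in characteristic two the Frobenius map $u\mapsto u^2$ is additive, which makes the ``$b$-part'' of each inner product literally $\alpha^{3m}$ times the square of the ``$a$-part''.

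Concretely, for $x\in\F_2^{m+1}$ (with coordinates indexed $0,\dots,m$) write $\supp(x)$ for its support and set
\[
f_x(\alpha)\defeq\langle a(\alpha),x\rangle=\sum_{k\in\supp(x)}\alpha^{k}\in\F_2[\alpha].
\]
The key observation is that
\[
\langle b(\alpha),x\rangle=\sum_{k\in\supp(x)}\alpha^{3m+2k}=\alpha^{3m}\sum_{k\in\supp(x)}(\alpha^2)^{k}=\alpha^{3m}f_x(\alpha^2)=\alpha^{3m}f_x(\alpha)^2,
\]
the last equality because squaring is additive over a field of characteristic two. Substituting these two identities into $P$ and using $-1=1$ in characteristic two,
\[
P(\alpha)=f_{x_i}\cdot\alpha^{3m}f_{x_j}^2+f_{x_j}\cdot\alpha^{3m}f_{x_i}^2=\alpha^{3m}\,f_{x_i}\,f_{x_j}\,\bigl(f_{x_i}+f_{x_j}\bigr).
\]

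It then suffices to check that each of the four factors is a nonzero element of the integral domain $\F_2[\alpha]\subseteq\F_{2^{10m}}[\alpha]$. Since $x_i$ and $x_j$ are nonzero their supports are nonempty, so $f_{x_i}\neq 0$ and $f_{x_j}\neq 0$; since $x_i\neq x_j$ we have $f_{x_i}+f_{x_j}=f_{x_i+x_j}\neq 0$ (coefficients add mod $2$, and $x_i+x_j\neq 0$); and $\alpha^{3m}\neq 0$. A product of nonzero elements of an integral domain is nonzero, so $P\neq 0$, and reading off degrees gives $\deg P\leq 3m+m+m+m=6m$, matching the degree bound asserted in the surrounding text.

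There is no serious obstacle here: the only step that needs a moment's thought is spotting that $\langle b(\alpha),x\rangle=\alpha^{3m}\langle a(\alpha),x\rangle^2$, after which everything is forced by unique factorization in $\F_2[\alpha]$. The five auxiliary polynomials listed just before the claim are handled the same way and are even easier — up to the shift by $\alpha^{3m}$ and squaring they are the polynomials $f_{x_i}$, $f_{x_j}$, $f_{x_i+x_j}$, or a sum of a polynomial of degree $\le m$ and a polynomial all of whose terms have degree $\ge 3m$ (so no cancellation is possible) — so I would dispose of them in a single sentence citing the same $\supp(x)\neq\emptyset$ / $x_i\neq x_j$ reasoning.
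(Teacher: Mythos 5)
Your proof is correct, and it takes a genuinely different and cleaner route than the paper's. The paper argues combinatorially: it orders $x_i, x_j$, picks specific indices $u$ (the highest coordinate where $x_i$ has a $1$ and $x_j$ a $0$) and $v$ (the highest coordinate in $\supp(x_j)$), and shows that the particular monomial $\alpha^{3m+2u+v-3}$ in $P$ cannot be cancelled because any competing pair $(u',v')$ producing the same exponent would force an even multiplicity. Your approach instead exploits a structural feature of the construction that the paper never names: since the exponents in $b(\alpha)$ are exactly $3m$ plus twice those in $a(\alpha)$, and since Frobenius gives $f_x(\alpha^2)=f_x(\alpha)^2$ in characteristic two, one has the identity $\langle b(\alpha),x\rangle=\alpha^{3m}\langle a(\alpha),x\rangle^2$, which collapses $P$ to the product $\alpha^{3m}\,f_{x_i}f_{x_j}(f_{x_i}+f_{x_j})$ of visibly nonzero factors in the integral domain $\F_2[\alpha]$. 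What this buys you is transparency (the nonvanishing, the degree bound $\le 6m$, and the nonvanishing of the other five auxiliary polynomials all fall out of the same factorization), at the cost of only working when the shift-and-double structure of $b$ relative to $a$ is present, whereas the paper's survivor-monomial argument is in principle adaptable to other choices of $a,b$. One tiny caveat: the factorization identity uses that the entries of $x_i,x_j$ lie in $\F_2$ (so that squaring distributes over the sum defining $f_x$), which is indeed the setting of the claim; worth stating explicitly if this were to be generalized to larger base fields.
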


\begin{proof}
W.L.O.G, let $x_i > x_j$ lexicographically. Let $u$ be the largest index such that $x_i[u] = 1$, and $x_j[u] = 0$, and let $v$ be the largest index such that $x_j[v] = 1$, so $x_i$ and $x_j$ agree on all coordinates indexed higher than $u$. We claim that monomial $\alpha^{3m + 2u + v - 3}$ in $P(\alpha)$ survives, and therefore $P$ is nonzero. Indeed, this is true since it is easy to see that if this monomial has to be cancelled out, it has to be equal to some $\alpha^{3m+ 2u' + v' - 3}$, where $x_i[u'] = x_j[v'] = 1$ and where $u < u',v' < v$. But by our assumption, $x_i$ and $x_j$ agree on every coordinate indexed higher than $u$, and therefore, $x_i[v'] = x_j[u'] = 1$, and therefore, the monomial $\alpha^{3m+ 2u' + v' - 3}$ is computed an even number of additional times.

\end{proof}

\section{Approach via reduction to common zeroes of a space of polynomials.}

In this section, we prove Theorem~\ref{thm:main} via finding common roots to a space of low degree polynomials. There are two components to this, the first is a reduction to an algebraic problem:

\begin{theorem}
\label{thm:reduction}
Let $y$ be a corrupted codeword from $RM(m, m - 2r - 2)$, with error locations at $E \subseteq \F_2^m$. There is an algorithm, \textsc{SpaceRoots}, that runs in time $O(m^{(r+1)\omega})$ that takes the syndrome of $y$ as input, and returns the space of all reduced polynomials of degree $\leq r+1$ that vanish on $E$. 
\end{theorem}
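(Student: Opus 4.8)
The plan is to unwind what the syndrome actually encodes and show that the space $V$ of degree-$\le r+1$ polynomials vanishing on $E$ is (up to linear algebra) recoverable from it. Recall that a parity-check matrix $H$ of $RM(m, m-2r-2)$ has, as its $x$-column, essentially $x^{\otimes \le 2r+1}$, so the syndrome of $y = c + e$ (with $\supp(e) = E$) is $S = \sum_{e \in E} e^{\otimes \le 2r+1}$, a vector indexed by $\mathcal{M}_{2r+1}^m$. First I would observe that from $S$ one can read off, for every monomial $M \in \mathcal{M}_{2r+1}^m$, the value $\sum_{p \in E} M(p)$; equivalently, one knows the linear functional ``evaluate-and-sum-over-$E$'' restricted to the space of polynomials of degree $\le 2r+1$. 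The key algebraic point is that $V = \{A : \deg A \le r+1,\ A|_E \equiv 0\}$ is the set of $A$ such that $A \cdot B$ sums to $0$ over $E$ for \emph{all} $B$ of degree $\le r+1$ — wait, that's not quite an iff, so instead I would use: $A$ vanishes on $E$ iff for every point-indicator-like test it sums correctly; more robustly, since $E$ satisfies property $U_r$, the evaluation map $p \mapsto p^{\otimes \le r}$ on $E$ is injective with independent image, and this is exactly the hypothesis that lets the bilinear form $\langle A, B \rangle_E := \sum_{p\in E} A(p)B(p)$ (defined on degree-$\le r$ reduced polynomials, computable from $S$ since $\deg(AB) \le 2r$) have the property that $A$ vanishes on $E$ iff $\langle A, B\rangle_E = 0$ for all $B$ — this is because $\mathrm{mat}(E^{\otimes \le r})$ has full row rank, so the Gram-type matrix it induces has kernel exactly the polynomials vanishing on $E$.

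So the core reduction is: build the $\binom{m}{\le r} \times \binom{m}{\le r}$ matrix $W$ whose $(M,M')$ entry is $\sum_{p \in E}(M M')(p)$ — every such entry is a coordinate of $S$ (since $\deg(MM') \le 2r \le 2r+1$), so $W$ is assembled in time $O(\binom{m}{\le r}^2) = O(m^{2r})$; then $W = \mathrm{mat}(E^{\otimes \le r})^T \cdot \mathrm{mat}(E^{\otimes \le r})$, hence $\ker W$ (as a space of degree-$\le r$ reduced polynomials) equals the space of degree-$\le r$ polynomials vanishing on $E$ — here I use $U_r$ to conclude $\ker W$ equals the \emph{left} kernel of $\mathrm{mat}(E^{\otimes \le r})$, i.e. genuinely the vanishing polynomials, not some larger space coming from degeneracy of the quadratic form. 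But the theorem asks for degree $\le r+1$, so I need to push this one degree further: form instead a matrix indexed by $\mathcal{M}_{r+1}^m \times \mathcal{M}_r^m$ (products have degree $\le 2r+1$, still inside $S$), namely the $(M, M')$-entry $\sum_{p\in E}(MM')(p)$; its right kernel, intersected appropriately, gives degree-$\le r+1$ polynomials $A$ with $\sum_p (A \cdot M')(p) = 0$ for all $M' \in \mathcal{M}_r^m$, and I claim (again using $U_r$: the degree-$\le r$ evaluation vectors span, so testing against all degree-$\le r$ monomials detects vanishing, because if $A(p_0)\neq 0$ for some $p_0 \in E$, there is a degree-$\le r$ polynomial — in fact one can take a product of at most $r$ affine forms, but cleaner: $U_r$ gives a degree-$\le r$ polynomial that is the ``dual'' to $p_0$, i.e. $1$ at $p_0$ and $0$ on $E \setminus \{p_0\}$ — against which the sum is $A(p_0) \neq 0$). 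Hence the right kernel of this $\binom{m}{\le r+1} \times \binom{m}{\le r}$ matrix is exactly $V$.

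Then \textsc{SpaceRoots} is: (i) assemble the $\binom{m}{\le r+1}\times \binom{m}{\le r}$ matrix $W$ from the syndrome; (ii) compute a basis for its right kernel by Gaussian elimination / rank computation, which over a field of the relevant size costs $O\!\left(\binom{m}{\le r+1}^{\omega}\right) = O(m^{(r+1)\omega})$ using fast matrix multiplication; (iii) output that basis as the space of reduced degree-$\le r+1$ polynomials vanishing on $E$. The running time is dominated by step (ii), giving the claimed $O(m^{(r+1)\omega})$, since step (i) is only $O(m^{2r+1})$ and extracting the syndrome entries is free.

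The main obstacle I expect is \emph{not} the linear algebra but pinning down the exact algebraic claim in step-two: I must verify that property $U_r$ (linear independence of $E^{\otimes \le r}$) is genuinely sufficient to guarantee that ``summing $A\cdot M'$ over $E$ vanishes for all $M' \in \mathcal{M}_r^m$'' is equivalent to ``$A$ vanishes on $E$,'' and in particular that it supplies, for each $p_0 \in E$, a degree-$\le r$ polynomial supported (after summing over $E$) only at $p_0$. This is exactly where randomness of $E$ / the $U_r$ hypothesis is used, and where a careless argument would accidentally only recover a space \emph{containing} $V$ rather than $V$ itself; I would be careful to state this as: since the rows of $\mathrm{mat}(E^{\otimes \le r})$ are linearly independent, the matrix has a left inverse, so for each $p_0$ there is a linear combination of degree-$\le r$ monomials whose evaluation vector over $E$ is $\mathbf{1}_{p_0}$, which is precisely the needed dual polynomial. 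With that in hand, the equivalence and the kernel characterization go through cleanly.
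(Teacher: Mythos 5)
Your proof is correct, and it takes a genuinely different route from the paper's. Both arrive at the same linear system --- the matrix indexed by $\mathcal{M}_{r+1}^m \times \mathcal{M}_r^m$ whose $(M,M')$-entry is the syndrome coordinate $s_{\operatorname{reduce}(MM')} = \sum_{p\in E}(MM')(p)$, solved for the coefficient vector of $A = \sum_M a_M M$ --- and both extract $V$ as the right kernel by Gaussian elimination in the claimed time. The difference is in justifying that the kernel equals $V$. The paper invokes Theorem~\ref{thm:ssv} from \cite{SSV:15}: $A$ satisfies the system iff $A\cdot y$ has degree $\leq m-r-1$, and by that theorem this happens iff $A$ vanishes on $E$. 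You bypass Theorem~\ref{thm:ssv} entirely and argue directly from property $U_r$: if $\sum_{p\in E} A(p)M'(p) = 0$ for every $M' \in \mathcal{M}_r^m$, then the evaluation vector $(A(p))_{p\in E}$ lies in the left kernel of $\operatorname{mat}(E^{\otimes\leq r})$, which is trivial precisely because the rows $e^{\otimes\leq r}$ are linearly independent. Your ``dual polynomial'' phrasing is the same fact packaged via a \emph{right} inverse of $\operatorname{mat}(E^{\otimes\leq r})$ (you write ``left inverse,'' but full row rank gives a right inverse --- a harmless terminology slip that does not affect the argument). Your route is more self-contained and makes the role of $U_r$ explicit at this step; the paper's route goes through the Berlekamp--Welch equation $A(x)y(x)=B(x)$, which is natural in context since the subsequent algorithm (Theorem~\ref{thm:findroots}) is organized around that same formulation, but it is less elementary as a standalone proof of this reduction.
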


We are now left with the following neat problem, which is interesting in its own right, namely, finding the roots of a space of low degree polynomials:

\begin{theorem}
\label{thm:findroots}
For a set of points $E \subseteq \F_2^m$ that satisfy property $U_r$, given the space $V$ of all reduced polynomials of degree $\leq r+1$ that vanish on $E$, there is an algorithm,\textsc{FindRoots}, that runs in time $m^{2r}$ , and returns the set $E$ with probability $1 - o(1)$.
\end{theorem}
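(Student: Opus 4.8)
The plan is to implement the Valiant-Vazirani-style isolation strategy sketched in the introduction. Given the space $V$ of all reduced degree-$\leq r+1$ polynomials vanishing on $E$, the first task is to read off a single element of $E$ in the lucky case $|E|=1$. If $E=\{e\}$, then $V$ is exactly the set of reduced polynomials of degree $\leq r+1$ vanishing at $e$; this is a hyperplane in the space of all reduced polynomials of degree $\leq r+1$, namely the kernel of the evaluation-at-$e$ functional $P \mapsto P(e)$. Dually, the orthogonal complement of $V$ inside the monomial-coefficient space is spanned by the single vector $e^{\otimes \leq r+1}$, from which $e$ itself is immediately recovered by reading the coordinates corresponding to the degree-$1$ monomials $X_1,\dots,X_m$. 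So the core subroutine is: check that $\mathrm{codim}(V)=1$ (equivalently $\dim V = \binom{m}{\leq r+1}-1$), and if so extract the point.

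To reduce the general case to this one, I would intersect $E$ with a uniformly random affine subspace $G \subseteq \F_2^m$ of codimension $c$ chosen so that $2^c \approx |E|$; a standard pairwise-independence argument shows that with constant probability $|E \cap G| = 1$. The subtlety, already flagged in the introduction, is that we are not handed the analogous space of polynomials for $E \cap G$ — we must compute it from $V$. Writing $G$ as the solution set of an affine system, restriction to $G$ identifies $G$ with $\F_2^{m-c}$ via an affine map $\varphi$; precomposition with $\varphi$ sends a degree-$\leq r+1$ polynomial on $\F_2^m$ to a degree-$\leq r+1$ polynomial on $\F_2^{m-c}$, and I claim that $\{P \circ \varphi : P \in V\}$ equals the full space of reduced degree-$\leq r+1$ polynomials vanishing on $E\cap G$, provided $E$ is random and $|E| = o(m^r)$ so that both $E$ and $E \cap G$ satisfy the relevant $U_r$-type nondegeneracy (which guarantees that the space of vanishing polynomials has the expected codimension $|E \cap G|$, and that restriction is surjective onto it). This is where property $U_r$ and the randomness of $E$ are essential: they ensure the dimension counts line up both before and after restriction, so that the restricted space is not accidentally too small. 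Once $|E \cap G| = 1$, apply the core subroutine to recover one point $e \in E$; then add the polynomial (or rather, pass to the subspace of $V$ consistent with) having recovered $e$, remove $e$ from consideration, and repeat. After $|E| = O(m^r)$ successful rounds (boosting each round's constant success probability by repetition), we have all of $E$.

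The main obstacle I anticipate is the surjectivity-of-restriction claim: showing that restricting the full degree-$\leq r+1$ vanishing space of $E$ to the subspace $G$ yields the \emph{full} vanishing space of $E \cap G$, rather than a proper subspace. This requires a genuine dimension argument — knowing $\dim V = \binom{m}{\leq r+1} - |E|$ (from $E$ satisfying $U_{r+1}$, or the relevant property) and $\dim V_{E\cap G} = \binom{m-c}{\leq r+1} - |E\cap G|$, and then checking the restriction map has the right rank. One must be careful that a random affine subspace $G$ of codimension $c$ interacts well with the monomial structure (the image of reduced monomials under $\varphi$ must still span correctly), and that $E \cap G$ remains "random enough" in $\F_2^{m-c}$ to inherit the nondegeneracy property with good probability; a union bound over the $O(m^r)$ rounds and the randomness of the $G$'s then closes the argument. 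The running time is dominated by the linear-algebra over spaces of dimension $\binom{m}{\leq r+1} = O(m^{r+1})$ performed $O(m^r)$ times, giving the claimed $m^{2r}$-type bound up to the matrix-multiplication exponent.
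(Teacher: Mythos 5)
Your overall strategy is the same as the paper's: isolate a single point of $E$ by intersecting with a random affine subspace of codimension $\approx \log|E|$ (Valiant--Vazirani), show that the restricted polynomial space is the \emph{full} vanishing space of $E\cap G$, read off the unique point, and repeat. Your ``$|E|=1$'' subroutine (extract $e^{\otimes \leq r+1}$ as a spanning vector of $V^{\perp}$ and read the degree-$1$ coordinates) is the dual formulation of the paper's \textsc{FindUniqueRoot}, which instead checks for each $j$ whether $X_j$ or $1-X_j$ lies in $V$; these are equivalent. A cosmetic difference is that the paper works with a random \emph{invertible} affine change of variables followed by restrictions $X_m=0$, $X_{m-1}=0,\ldots$ one coordinate at a time, which makes the bookkeeping modular and lets them reuse a single hyperplane-restriction lemma; you restrict to $G$ in one shot, which is fine but means you have to re-derive the analogue of that lemma directly.

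The one place your argument has a real gap is exactly the one you flag: surjectivity of the restriction map $V \to V_{E\cap G}$. You propose a dimension count, but a dimension count alone does not close it. Writing things out: $\dim\operatorname{res}(V) = \dim V - \dim(V \cap \ker(\operatorname{res}))$, and $V\cap\ker(\operatorname{res})$ is the space of degree-$\leq r+1$ polynomials vanishing on $G$ and on $E\setminus G$. To get the desired dimension you need the evaluation map $\ker(\operatorname{res}) \to \F_2^{E\setminus G}$ to be surjective, i.e.\ a nondegeneracy statement about the points of $E$ \emph{off} $G$ relative to the ideal of $G$. This is not a consequence of ``$E\cap G$ inherits $U_r$''; it is a separate claim. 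The paper's Lemma~\ref{lem:polyrestrict} proves it constructively: given $Q \in V_1$ it builds a lift $Q + X_m P' \in V$ by solving a linear system in the coefficients of $P'$ on the points $\hat{E}_1$ (those $e$ with $(e,1)\in E$), and the solvability is guaranteed precisely because $\hat{E}_1$ satisfies property $U_r$ (Claim~\ref{lem:restrank} applied to a subset of $E$). You would need to add such an argument, plus the paper's Proposition that a random invertible affine map preserves $U_r$, to turn your sketch into a proof. Once that lemma is in place, the rest of your outline (isolation probability, $O(t\log t)$ repetitions, running time dominated by linear algebra on spaces of dimension $O(m^{r+1})$) matches the paper.
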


The rest of this section is will be dedicated to proving Theorem~\ref{thm:reduction}, and setting up the stage for Theorem~\ref{thm:findroots}.

We set up more notation that will be continue to be used in the paper: For a vector $v \in \F_2^{2^m}$, we will treat $v$ as a function from $\F_2^m$ to $\F_2$ and vice versa in the natural way, i.e., for a point $x \in \F_2^m$, $v(x)$ is the coordinate corresponding to point $x$ in $v$.

We shall use the following theorem from \cite{SSV:15} that completely characterizes the space of polynomials $V$ that we are looking for:

\begin{theorem}
\label{thm:ssv}
For a set of points $E$ satisfying property $U_r$, let $y$ be the codeword from $RM(m,m - 2r - 2)$ which is flipped at points in $E$. Then, there exists nontrivial polynomials $A$, and $B$ that satisfy:
\begin{align*}
A(x)\cdot y (x) = B(x) & \text{ $\forall x \in \F_2^m$}, 
\end{align*}

where $deg(A) \leq r+1$ and $deg(B) \leq m - r - 1$. Moreover, $E$ is the set of common zeroes of all such $A$'s which satisfy the equations. Furthermore, for 
every polynomial $A$ that vanishes on all points of $E$, there is a $B$
such that the above equation is satisfied.
\end{theorem}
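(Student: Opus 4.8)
The plan is to establish the three claims in order: existence of a nontrivial pair $(A,B)$, the fact that $E$ is exactly the common zero set of all valid $A$'s, and the "furthermore" that every $A$ vanishing on $E$ extends to a valid pair. Since this is quoted from \cite{SSV:15}, I would reproduce their argument adapted to the syndrome/$U_r$ setting. First I would set up the counting. Let $y = c + e$ where $c \in RM(m, m-2r-2)$ and $e = \mathds{1}_E$. I want $A$ of degree $\leq r+1$ with $A \cdot y = B$ for some $B$ of degree $\leq m - r - 1$. Write $A \cdot y = A \cdot c + A \cdot e$ (pointwise product of evaluation vectors). The product $A \cdot c$ has degree $\leq (r+1) + (m - 2r - 2) = m - r - 1$, which is automatically fine. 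So the constraint "$A\cdot y$ has degree $\leq m-r-1$" is equivalent to "$A \cdot e$ has degree $\leq m - r - 1$", i.e., the degree-$\geq m-r$ part of $A \cdot \mathds{1}_E$ vanishes. Equivalently, using the standard fact that a function $\F_2^m \to \F_2$ has top-degree behaviour controlled by its values, $A \cdot \mathds{1}_E = \sum_{x \in E} A(x)\, \mathds{1}_{\{x\}}$, and the "degree $> m - r - 1$" components of $\mathds{1}_{\{x\}}$ are spanned by $x^{\otimes \leq r}$-type data (the indicator of a point has a known expansion in monomials whose high-degree part is determined by that point). So the condition becomes a linear system on the coefficient vector of $A$: it says a certain linear map sends $A$ into a space of dimension $\binom{m}{\leq r}$ (the number of "forbidden" high-degree monomials, which is $\binom{m}{\geq m-r} = \binom{m}{\leq r}$). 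Since $A$ ranges over a space of dimension $\binom{m}{\leq r+1} > \binom{m}{\leq r}$, by rank–nullity there is a nonzero solution $A$, and for any such solution $B := A \cdot y$ has degree $\leq m-r-1$. This proves existence.

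Next, the characterization of $E$. One direction: if $A$ is a valid solution (nonzero, with its $B$), I claim $A$ vanishes on $E$. Evaluate the identity $A(x) y(x) = B(x)$ and compare with what $A(x) c(x)$ would be; the point is that $B - A\cdot c$ has degree $\leq m-r-1$ but equals $A \cdot e = \sum_{x\in E} A(x)\mathds{1}_{\{x\}}$, whose high-degree part we already forced to be zero — but more is true: a nonzero combination $\sum_{x \in E} \lambda_x \mathds{1}_{\{x\}}$ has degree exactly $m$ unless all $\lambda_x = 0$, \emph{provided} the relevant point-indicators are "independent at the top", which is precisely where property $U_r$ enters. Concretely, property $U_r$ says the vectors $\{x^{\otimes \leq r} : x \in E\}$ are linearly independent; I would show this forces the map $(\lambda_x)_{x\in E} \mapsto (\text{high-degree part of } \sum \lambda_x \mathds{1}_{\{x\}})$ to be injective, so $\sum_{x\in E} A(x)\mathds{1}_{\{x\}}$ having no high-degree part forces $A(x) = 0$ for all $x \in E$. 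Hence every valid $A$ vanishes on $E$, so the common zero set contains $E$. For the reverse inclusion, I need that the common zero set is no larger than $E$: I would use the "furthermore" statement together with a dimension count.

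So I would prove the "furthermore" next, then close the loop. Claim: if $A$ has degree $\leq r+1$ and $A(x) = 0$ for all $x \in E$, then $B := A \cdot y$ has degree $\leq m - r - 1$, i.e., $A$ is valid. Indeed $A \cdot e = \sum_{x\in E} A(x) \mathds{1}_{\{x\}} = 0$ identically, so $A \cdot y = A \cdot c$, which has degree $\leq (r+1)+(m-2r-2) = m - r - 1$. Done. Now the two facts — "valid $\Rightarrow$ vanishes on $E$" and "vanishes on $E$ $\Rightarrow$ valid" — together say: $\{$valid $A\} = \{A$ of degree $\leq r+1$ vanishing on $E\}$. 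The common zero set of $\{$valid $A\}$ is therefore the common zero set of $\{$all degree-$\leq r+1$ polynomials vanishing on $E\}$; call it $E'$. Trivially $E \subseteq E'$. For $E' \subseteq E$: if $z \notin E$, I must produce a degree-$\leq r+1$ polynomial vanishing on $E$ but not at $z$. This is exactly an interpolation statement, and again it is property $U_r$ that guarantees it — $U_r$ (for $E \cup \{z\}$, or via a Vandermonde-type argument using that $\{x^{\otimes \leq r}: x \in E\}$ together with $z^{\otimes \leq r}$ stays independent, which holds because $U_r$ for the relevant superset follows from genericity/ is part of the hypothesis as used in \cite{SSV:15}) lets us find coefficients for a polynomial of degree $\leq r < r+1$ vanishing on $E$ and equal to $1$ at $z$. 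Hence $E' = E$.

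The main obstacle is the degree bookkeeping for point-indicators over $\F_2^m$: making precise the statement that the "top $\binom{m}{\leq r}$ monomials" of $\sum_{x\in E}\lambda_x \mathds{1}_{\{x\}}$ are a linear image of $(\lambda_x)$ that is injective exactly when property $U_r$ holds. This requires the identity $\mathds{1}_{\{x\}}(X) = \prod_{i=1}^m (X_i + x_i + 1)$ and tracking, under the duality between low-degree monomials and high-degree monomials on the Boolean cube, that the degree-$(m-j)$ part of $\mathds{1}_{\{x\}}$ corresponds to the degree-$j$ evaluation data $x^{\otimes \leq j}$. Once that correspondence is set up cleanly, everything else is rank–nullity and the pointwise-product degree bound $\deg(A \cdot c) \leq \deg A + \deg c$, both routine. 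I would cite \cite{SSV:15} for the original form and present the above as the adaptation to the $U_r$ hypothesis.
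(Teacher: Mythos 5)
The paper itself does not prove this theorem; it is quoted verbatim from \cite{SSV:15} and used as a black box, so there is no in-paper argument to compare against. Evaluating your proposal on its own merits: the existence step (a dimension count: $\binom{m}{\geq m-r} = \binom{m}{\leq r} < \binom{m}{\leq r+1}$ constraints on the coefficients of $A$), the ``furthermore'' (if $A$ kills $E$ then $A\cdot e \equiv 0$ so $A\cdot y = A\cdot c$ has degree $\leq m-r-1$), and the forward inclusion (a valid $A$ must vanish on $E$, via the observation that the coefficient of $\prod_{i\in S}X_i$ in $\mathds{1}_{\{x\}}$ is $\prod_{i\notin S}(x_i+1)$, so the high-degree part of $\sum_x\lambda_x\mathds{1}_{\{x\}}$ is a linear image of $(\lambda_x)_x$ that is injective by $U_r$ after the affine shift $x\mapsto x+\mathbf{1}$) are all sound.

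The gap is in the reverse inclusion, where you must produce, for each $z\notin E$, a polynomial of degree $\leq r+1$ vanishing on $E$ but not at $z$. You appeal to ``$U_r$ for $E\cup\{z\}$'' or ``genericity,'' and suggest a degree-$\leq r$ separator suffices. Neither is available: $U_r$ for $E$ does \emph{not} imply $U_r$ for $E\cup\{z\}$, and a degree-$\leq r$ separator genuinely need not exist. Concretely, take $r=1$, $m=3$, $E=\{(0,0,0),(1,0,0),(0,1,0)\}$ and $z=(1,1,0)$: then $z^{\otimes\leq 1}$ \emph{does} lie in $\operatorname{span}\{x^{\otimes\leq 1}:x\in E\}$, so every degree-$\leq 1$ polynomial vanishing on $E$ also vanishes at $z$, and your proposed argument stalls. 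The claim that must be proved is that $z^{\otimes\leq r+1}\notin\operatorname{span}\{x^{\otimes\leq r+1}:x\in E\}$ for all $z\notin E$, and this really uses the extra degree. One correct argument: suppose $z^{\otimes\leq r+1}=\sum_{x\in E}\lambda_x x^{\otimes\leq r+1}$. Fix $i\in[m]$ and any $S\subseteq[m]$ with $|S|\leq r$ and $i\notin S$. Reading off the $S$-coordinate and the $(S\cup\{i\})$-coordinate of this identity gives $\prod_{j\in S}z_j=\sum_x\lambda_x\prod_{j\in S}x_j$ and $z_i\prod_{j\in S}z_j=\sum_x\lambda_x x_i\prod_{j\in S}x_j$, whence $\sum_x\lambda_x(z_i+x_i)\prod_{j\in S}x_j=0$; for $i\in S$ this holds trivially. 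So $\sum_x\big(\lambda_x(z_i+x_i)\big)\,x^{\otimes\leq r}=0$, and by $U_r$ every $\lambda_x(z_i+x_i)=0$. Since $z\notin E$, for each $x\in E$ some $i$ has $z_i\neq x_i$, forcing $\lambda_x=0$; but the constant coordinate gives $\sum_x\lambda_x=1$, a contradiction. Without this (or an equivalent) argument, the ``$E$ is exactly the common zero set'' clause is unproved.
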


So, the way we prove Theorem~\ref{thm:reduction} is by finding polynomials $A(X_1,\ldots, X_m)$ of degree $\leq r+1$ such that $A\cdot y$ is a polynomial of degree $\leq m - r - 1$. Most important, we would like to find this space $V$ of polynomials \emph{efficiently}, i.e., in time $\operatorname{poly}(m^{2r})$. For this, we set up a system of linear equations and solve for $A$.

\begin{proof}[Proof of Theorem~\ref{thm:reduction}]

Let us use $s$ to denote the syndrome vector, whose entries are indexed by monomials of degree at most $2r + 1$. Let us denote 

\[
A = \sum_{M \in \mathcal{M}_{r+1}^m}a_M M(X_1,\ldots, X_m)
\]

to be a polynomial whose coefficients are indeterminates $a_M$ for $M \in \mathcal{M}_r^m$. We want that $A\cdot y$ is a polynomial of degree $\leq m - r - 1$, so we look at it as a codeword of $RM(m, m-r-1)$. Using the fact that the dual code of $RM(m, m-r-1)$ is $RM(m, r)$, we have, for any monomial $M' \in \mathcal{M}_r^m$,

\begin{align*}
0 & = \sum_{x \in \F_2^m}A(x)y(x)M'(x) \\
& = \sum_{M \in \mathcal{M}_{r+1}^m}a_M\sum_{x \in \F_2^n}y(x) M(x)M'(x) \\
& = \sum_{M \in \mathcal{M}_{r+1}^m}a_M s_{M\cdot M'}.
\end{align*}

Hence, the solution space to  this system of $|\mathcal{M}_r^m|$ equations in $|\mathcal{M}_{r+1}^m|$ variables gives us the space $V$ of all polynomials of degree $\leq r+1$ that vanish on all points of $E$. Moreover, we can do this efficiently in time $O(m^{(2r + 1)\omega})$ using gaussian elimination.

\end{proof}

 Once we have the above result, we can give a proof of Theorem~\ref{thm:main} assuming Theorem~\ref{thm:findroots}.

\begin{algorithm}
\label{alg:polyspace}
\begin{algorithmic}
\\
\Procedure{SyndromeDecode}{$S$}
\State $V \gets$\textsc{ SpaceRoots}($S$)
\State \textbf{return} \textsc{FindRoots}($V$)
\EndProcedure
\end{algorithmic}
\end{algorithm}

Of course, assuming we have an algorithm such as \textsc{FindRoots}, it is obvious that the above algorithm is, indeed what we are looking for. Most of the rest of the paper goes into finding such an algorithm.

\section{Efficiently finding roots of a space of polynomials.}

At this point, we are left with the following neat problem:

\begin{itemize}
\item[] \emph{Input:} Given the space $V$ of all polynomials in $m$ variables degree $r+1$ polynomials which vanish on a set of points $E$ satisfying property $U_r$.
\item[] \emph{Output:} The set $E$.
\end{itemize}

\subsection{A sketch of the rest of the algorithm.}

Let us denote $t = |E|$. The main idea in the rest of the algorithm is to restrict the set of points to only those lying on a randomly chosen affine subspace of codimension $\sim \log t$. The hope is that exactly one point in $E$ lies in this subspace. This happens with constant probability, and in fact, for every point $e \in E$, $e$ is the only point that lies in this subspace with probability at least $\frac{1}{4t}$. This is given by the Valiant-Vazirani lemma. If we could somehow find all multilinear polynomials of degree $\leq r+1$ on this subspace that vanish at $e$, we can just recover this point with relative ease. 

We repeat the above procedure $O(t \log t)$ times, and we will have found every error point with high probability.

In order to get into slightly more detail, we will set up the following notation, which we will continue to use:

\begin{itemize}
\item For $i \in [m]$, let us denote $E_i$ to be the set of errors left after restricting the last $i$ variables to zero and dropping these coordinates, i.e., $E_i \defeq \{x \suchthat  (x_1,\ldots, x_{m-i},0,\ldots, 0) \in E \}$.
\item For $i \in [m]$, let us denote $V_i \subseteq \F_2[X_1,\ldots, X_{m-i}]$ be the space of all polynomials of degree $\leq r+1$ vanishing on $E_i$.
\end{itemize}

Here is another way to look at the above approach which makes the analysis fairly straightforward: Suppose in the (initially unknown) set $E$, we restricted ourselves only to points that lie on $X_m = 0$, and the number of points is strictly less than $|E|$. We can find the space of all degree $\leq r+1$ polynomials $V_1$ that vanish on this subset by simply setting $X_m = 0$ in all the polynomials in $V$ (see Section~\ref{sec:restrict}). Thus, we have reduced it to a problem in $\F_2^{m-1}$ with fewer points.

Suppose after setting the last $k$ variables to zero, there is exactly one point $e$ left. Let the space of polynomials that vanish on this point be $V_k$. We observe that for every $i \in [m-k]$, there is a polynomial $a - X_i \in V_k$  for exactly one value of $a \in \F_2$. This is because $V_k$ is the space of \emph{all} polynomials of degree $\leq r+1$ that vanish on $e$. So $e(i) - X_i$, for every $i \in [m-k]$, is a degree $1$ polynomial vanishing on $e$, and therefore must belong to $V^k$. In fact, we can `read off' the first $m-k$ coordinates of $e$ from $V_k$ by looking at these polynomials. The other coordinates, as dictated by our restriction, are $0$ (see Section~\ref{sec:FindUniqueRoot}).

Of course, there are a few problems with the above approach. Firstly, restricting $E$ to $X_n = 0$ might not reduce the size at all. This is exactly where the randomness of the invertible affine transformation comes to use. The idea is that this ensures that around half the points are eliminated at each restriction. For appropriately chosen $k$, the Valiant-Vazirani Lemma (see Section~\ref{sec:valiantvazirani}) says that an affine linear restriction of codimension $k$ isolates exactly one error location with constant probability. Next, a subtle, but crucial point is that after an invertible linear transformation, the set of points $E$ must still satisfy property $U_r$. Fortunately, this is not very difficult either (see Section~\ref{sec:propertyur}).

A final remark is that we store the error location once we find it, but thinking back to the decoding problem, once an error is found, it can also be directly corrected. This step is easy. For example, over $\F_2$, an error location $e$ is corrected by adding the vector $e^{\otimes 2r + 1}$ to the syndrome $S$. Over fields of other characteristics, adding $e^{\otimes 2r+1}$ to the syndrome does not ensure that the error at location $e$ has been corrected. However, this isn't a problem because any error location that has not been corrected will be found again. At this point, we add a different multiple of $e^{\otimes 2r+1}$ to the syndrome and continue.   

We will now proceed to analyze each of the above mentioned steps separately.

\subsection{Counting the number of error locations.}
\label{sec:FindUniqueRoot}

We have briefly mentioned that we can check if the size of the set $E_i$ at stage $i$ is $1$ or not. However, something more general is true: we can \emph{count} the number of error locations left $|E_i|$ at any stage. This more general fact will prove to be especially useful in the derandomization of this algorithm, given in Section~\ref{sec:derandomization}. It basically follows from the following simple fact:

\begin{claim}
\label{lem:restrank}
The vectors in  $E_i^{\otimes \leq r}$ are linearly independent.
\end{claim}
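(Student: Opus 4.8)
The plan is to reduce the statement directly to the hypothesis that $E$ satisfies property $U_r$, by observing that restricting the last $i$ coordinates to zero and then deleting them is, at the level of the tensored vectors, a harmless operation: the only coordinates of $v^{\otimes \leq r}$ it throws away are ones that were already zero on the relevant points.

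First I would identify $E_i$ with the subset $E' \defeq \{x \in E \suchthat x_{m-i+1} = \cdots = x_m = 0\}$ of $E$, the identification being "drop the last $i$ (zero) coordinates"; this is a bijection, and I write $x \mapsto \widehat{x}$ for the inverse map that pads a point of $\F_2^{m-i}$ back out with $i$ zeros, so $\widehat{x} \in E'$ for $x \in E_i$. Next, fix $x \in E_i$ and compare $x^{\otimes \leq r} \in \F_2^{\binom{m-i}{\leq r}}$ (indexed by $\mathcal{M}_r^{m-i}$) with $\widehat{x}^{\otimes \leq r} \in \F_2^{\binom{m}{\leq r}}$ (indexed by $\mathcal{M}_r^{m}$). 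Every monomial $M \in \mathcal{M}_r^{m}$ either lies in $\mathcal{M}_r^{m-i}$, in which case $M(\widehat{x}) = M(x)$, or it involves one of the variables $X_{m-i+1}, \ldots, X_m$, in which case $M(\widehat{x}) = 0$ since $\widehat{x}$ vanishes in those coordinates. Hence $\widehat{x}^{\otimes \leq r}$ is supported on the coordinate block indexed by $\mathcal{M}_r^{m-i}$, and on that block it agrees with $x^{\otimes \leq r}$.

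Now I would finish as follows. Suppose $\sum_{x \in E_i} \lambda_x\, x^{\otimes \leq r} = 0$ with $\lambda_x \in \F_2$. Apply the same coefficients to the padded vectors: for a coordinate indexed by $M \in \mathcal{M}_r^{m-i}$ the sum $\sum_x \lambda_x M(\widehat{x}) = \sum_x \lambda_x M(x)$ is the corresponding coordinate of the assumed relation, hence $0$; for a coordinate indexed by any other $M$, each term $\lambda_x M(\widehat{x})$ already vanishes. Therefore $\sum_{x \in E_i} \lambda_x\, \widehat{x}^{\otimes \leq r} = 0$. But $\{\widehat{x} \suchthat x \in E_i\} = E' \subseteq E$, and $E$ satisfies $U_r$, so the vectors $\{\widehat{x}^{\otimes \leq r} \suchthat x \in E'\}$ are linearly independent; thus every $\lambda_x = 0$, and $E_i^{\otimes \leq r}$ is linearly independent.

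The only point requiring care is this "no information lost under projection" observation — in general, deleting coordinates of a family of vectors can drop its rank, and the reason it does not here is precisely that the discarded coordinates are identically zero on the points of $E'$. There is no real obstacle; the content is bookkeeping about which monomial-indexed coordinates survive. If one prefers, the identical argument run one variable at a time yields the claim by induction on $i$ (each single restriction $X_{m-i} = 0$ sends $E_{i-1}^{\otimes \leq r}$ to a subset of a projection of itself onto coordinates on which the discarded ones vanish), but the direct version above already handles general $i$.
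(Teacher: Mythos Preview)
Your argument is correct and is essentially the same as the paper's: both observe that for each $x\in E_i$ the padded point $\widehat{x}\in E$ has $\widehat{x}^{\otimes\leq r}$ agreeing with $x^{\otimes\leq r}$ on the $\mathcal{M}_r^{m-i}$-coordinates and vanishing elsewhere, so a linear dependence in $E_i^{\otimes\leq r}$ lifts to one in $E^{\otimes\leq r}$, contradicting property $U_r$. You have simply spelled out in detail what the paper's two-sentence proof leaves implicit.
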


\begin{proof}
Consider vector $e^{\otimes \leq r} \in E_i^{\otimes \leq r}$. The entries of $e$ come from a fixed subset of the nonzero coordinates of some vector $\tilde{e}^{\otimes \leq r} \in E^{\otimes \leq r}$. Since, the vectors in $E^{\otimes \leq r}$ are linearly independent, it follows that the vectors in $E_i^{\otimes \leq r}$ are also linearly independent.

\end{proof}

Let $E_i = \{e_1,\ldots, e_k\}$. Since $e_1^{\otimes \leq r}, \ldots, e_k^{\otimes \leq r}$ are linearly independent, we have that \linebreak $e_1^{\otimes \leq r+1}, \ldots, e_k^{\otimes \leq r+1}$ are also linearly independent. Therefore, $V_i$ given by the null space of the matrix $\operatorname{mat}(E_i^{\otimes \leq r+1})$, which (recall) is given by:

\[
  \left(
  \begin{array}{ccc} 
    \horzbar & e_1^{\otimes \leq r+1} & \horzbar \\
     & \vdots &  \\
    \horzbar & e_k^{\otimes \leq r+1} & \horzbar
  \end{array}
  \right) \quad
\]

and as a consequence, has codimension exactly equal to the number of points in $E_i$. This gives a general way to count the number of error points that we are dealing with. Thus we have:

\begin{equation}
\label{eqn:number}
|E_i| = \operatorname{codim}(V_i).
\end{equation}

However, in the case where there is just one point, $e$, it is easier to check, and even recover the point. The idea is that for every $j \in [m]$, exactly one of $X_j$ and $1 - X_j$ is in $V$ depending on whether $e(j) = 0$ or $e(j) = 1$ respectively. In this spirit, we define the algorithm to read off a point given the space of all degree $\leq r+1$ polynomials vanishing on it, in fact, the algorithm returns $\bot$ if there isn't exactly one point.

\begin{algorithm}
\label{alg:FindUniqueRoot}
\begin{algorithmic}
\\
\Procedure{FindUniqueRoot}{$V$}
\State \textbf{Initialize} $e$ 
\For{$j \in [m]$}
\If{$X_j \in V \And 1 - X_j \not \in V$}
\State $e(j) \gets 0$
\ElsIf{$1 - X_j \in V \And X_j \not \in V$}
\State $e(j) \gets 1$
\Else
\State \textbf{return} $\bot$
\EndIf
\EndFor
\State \textbf{return} $e$
\EndProcedure
\end{algorithmic}
\end{algorithm}

\subsection{Applying a random invertible affine map.}
\label{sec:propertyur}
We had also briefly mentioned that when we analyze the algorithm, we are applying a random invertible affine map to $\F_2^m$. We do need to prove that after this map, the set of points still satisfy property $U_r$.

\begin{proposition}
For an invertible affine map $L$, if $E$ satisfies property $U_r$, then $L(E)$ also satisfies $U_r$.
\end{proposition}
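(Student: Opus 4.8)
The plan is to show that an invertible affine map $L$ on $\F_2^m$ induces an invertible linear map on the space of "tensored" vectors, which preserves linear independence of $E^{\otimes \leq r}$. First I would write $L(x) = Mx + v$ for an invertible matrix $M \in \F_2^{m \times m}$ and a shift $v \in \F_2^m$. The key observation is that for any polynomial $P(X_1,\ldots,X_m)$ of degree $\leq r$, the composition $P \circ L$ is again a polynomial of degree $\leq r$ (since affine substitution does not raise degree), and the map $P \mapsto P \circ L$ is an invertible $\F_2$-linear map on the space of (reduced) polynomials of degree $\leq r$, with inverse $Q \mapsto Q \circ L^{-1}$. Identifying the space of reduced polynomials of degree $\leq r$ with $\F_2^{\binom{m}{\leq r}}$ via coefficient vectors, this substitution map is represented by some invertible matrix $T$.

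Next I would connect this to the evaluation vectors $x^{\otimes \leq r}$. By definition, for a monomial $M \in \mathcal{M}_r^m$, the $M$-entry of $x^{\otimes \leq r}$ is $M(x)$, so for any polynomial $P = \sum_M a_M M$ we have $\langle a, x^{\otimes \leq r}\rangle = P(x)$. Now $(L(x))^{\otimes \leq r}$ is the vector whose $M$-entry is $M(L(x)) = (M \circ L)(x)$; expanding $M \circ L = \sum_{M'} c_{M,M'} M'$ in the monomial basis, we get $M(L(x)) = \sum_{M'} c_{M,M'} M'(x)$, i.e. $(L(x))^{\otimes \leq r} = C \cdot x^{\otimes \leq r}$ where $C$ is the (constant) matrix with entries $c_{M,M'}$. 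Since the substitution $P \mapsto P\circ L$ is invertible on reduced polynomials of degree $\leq r$, the matrix $C$ is invertible. Therefore $\operatorname{mat}(L(E)^{\otimes \leq r}) = \operatorname{mat}(E^{\otimes \leq r}) \cdot C^T$, and multiplying by an invertible matrix on the right preserves row rank; hence the rows of $\operatorname{mat}(L(E)^{\otimes \leq r})$ are linearly independent whenever those of $\operatorname{mat}(E^{\otimes \leq r})$ are. This is exactly the statement that $L(E)$ satisfies property $U_r$.

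The one point that requires a little care — and which I expect to be the main (mild) obstacle — is making sure everything is done consistently in the ring of \emph{reduced} polynomials (individual degree $\leq 1$), i.e. in $\F_2[X_1,\ldots,X_m]/(X_i^2 - X_i)$, since over $\F_2$ we always reduce via $X_i^2 = X_i$. One must check that affine substitution $X_i \mapsto \sum_j M_{ij} X_j + v_i$ followed by reduction still cannot increase the total degree beyond $r$: a product of $k \leq r$ such linear forms, after multilinear reduction, has degree $\leq k \leq r$, because each reduction step $X_j^2 \to X_j$ only decreases degree. Given that, the substitution map on the $\binom{m}{\leq r}$-dimensional space of reduced polynomials of degree $\leq r$ is well-defined and invertible (its inverse is substitution by $L^{-1}$, which lands in the same space by the same degree argument), so $C$ is invertible and the argument above goes through verbatim. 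An entirely analogous argument (replacing $r$ by $r+1$) shows $L$ also acts invertibly on the degree-$\leq r+1$ reduced polynomials, which is what is used when transporting the space $V$ under random affine maps elsewhere in the algorithm.
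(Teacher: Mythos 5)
Your proof is correct. The core observation in both your argument and the paper's is that affine substitution $P \mapsto P \circ L$ is an invertible linear map on the space of reduced polynomials of degree $\leq r$ (because affine substitution followed by multilinear reduction cannot increase degree, and the inverse is substitution by $L^{-1}$). Where you diverge is in how this observation is applied. You use it to produce an explicit invertible matrix $C$ with $(L(x))^{\otimes \leq r} = C\, x^{\otimes \leq r}$ for every $x$, so that $\operatorname{mat}(L(E)^{\otimes \leq r}) = \operatorname{mat}(E^{\otimes \leq r})\, C^T$, and right-multiplication by an invertible matrix preserves row rank. The paper instead works on the dual side: it exhibits a bijection between the space of degree-$\leq r$ reduced polynomials vanishing on $E$ and the space of those vanishing on $L(E)$, concludes that the null spaces of $\operatorname{mat}(E^{\otimes \leq r})^T$ and $\operatorname{mat}(L(E)^{\otimes \leq r})^T$ have the same dimension, and then invokes rank-nullity to equate the ranks. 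Your version is a bit more direct and arguably cleaner, since it produces the linear map acting on the evaluation vectors themselves rather than reasoning through codimensions of annihilator spaces; it also has the small benefit of immediately giving the identity $\operatorname{mat}(L(E)^{\otimes \leq r}) = \operatorname{mat}(E^{\otimes \leq r})\, C^T$, which is useful elsewhere. Both arguments are sound, and your caveat about checking that reduction keeps things inside degree $\leq r$ is exactly the right point to be careful about.
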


\begin{proof}
There is a bijection between set of degree $\leq r$ reduced polynomials vanishing on $E$, and the set of degree $\leq r$ reduced polynomials vanishing on $L(E)$ given by applying the map $L$ to variables. For a reduced polynomial $P(X_1,\ldots, X_m)$ of degree $\leq r$ vanishing on $E$, we have that the polynomial $\operatorname{reduce}(P(L^{-1}(X_1), \ldots, L^{-1}(X_m)))$  vanishes on $T(E)$. Moreover, this is unique, in that no other $P'(X_1,\ldots, X_m)$ maps to this polynomial. This is easy to see since there is a unique way to go between the evaluation tables of $P(X_1,\ldots, X_m)$ and $\operatorname{reduce}(P(L^{-1}(X_1), \ldots, L^{-1}(X_m)))$, and no two distinct reduced polynomials have the same evaluation tables.

Similarly, for a reduced polynomial $Q(X_1,\ldots, X_m)$ of degree $\leq r+1$ vanishing on $T=L(E)$, we have that the polynomial $\operatorname{reduce}(Q(L(X_1), \ldots, L(X_m)))$ vanishes on $E$, and no other $Q'(X_1,\ldots, X_m)$ maps to this polynomial. Therefore, the number of points in the null space of $\operatorname{mat}(E^{\otimes \leq r})^T$ has the same size as the null space of $\operatorname{mat}(L(E)^{\otimes \leq r})^T$. Therefore, the spaces has the same codimension, and the rank of $\operatorname{mat}(L(E)^{\otimes \leq r})$ is the same as the rank of $\operatorname{mat}(L(E)^{\otimes \leq r})$.

\end{proof}

\subsection{The Valiant-Vazirani isolation lemma.}
\label{sec:valiantvazirani}

Here, we will make use of a simple fact, commonly referred to as the Valiant-Vazirani Lemma \cite{VV:86} to isolate a single point in $E$ using a subspace of appropriate codimension. We will also include the proof (see Appendix~\ref{sec:vvlemma}) due to its simplicity. We state the lemma:

\begin{lemma}[Valiant-Vazirani Lemma.]
\label{lem:vv}
For integers $t$ and $m$ such that $t \leq \frac{1}{100}2^{m/2}$, let $l$ and $c$ be such that $l$ is an integer, and $l = \log_2 ct$ where $2 \leq c < 4$. Given $E \subset  \F_2^m$ such that $|E| = t$, let $a_1,\ldots,a_l$ be uniform among all sets of $l$ linearly independent vectors, and $b_1,\ldots, b_l$ be uniformly and independently chosen elements of $\F_2$. Let 
\[
S \defeq \{x \in E \suchthat \langle x, a_k \rangle = b_k~\forall k \in [l]\}.
\]
Then, for every $e \in E$,

\[
\Pr(S = \{e\}) \geq \frac{1}{7t}.
\]
\end{lemma}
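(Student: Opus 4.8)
The plan is to fix an arbitrary $e \in E$ and show that with probability at least $\frac{1}{7t}$, the hash constraints $\langle x, a_k\rangle = b_k$ for all $k \in [l]$ are satisfied by $e$ but by no other point of $E$. First I would split the event into two parts: (i) $e \in S$, i.e.\ $e$ satisfies all $l$ constraints, and (ii) no other $e' \in E \setminus \{e\}$ satisfies all $l$ constraints. For part (i), since the $b_k$'s are uniform and independent of the $a_k$'s, conditioning on any choice of $a_1,\dots,a_l$, the probability that $\langle e, a_k\rangle = b_k$ for all $k$ is exactly $2^{-l}$. For part (ii), I would bound the bad event by a union bound over $e' \neq e$: for a fixed $e' \neq e$, I want $\Pr[e' \in S \mid e \in S]$ to be close to $2^{-l}$.

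The key technical point is a pairwise-independence-style estimate: for distinct $e, e' \in E$, the pair of random variables $\big(\langle e, a_k\rangle - b_k,\ \langle e', a_k\rangle - b_k\big)$ should be (nearly) uniform on $\F_2^2$ for each $k$, and independent across $k$. The $b_k$ part handles one coordinate, and the difference $\langle e - e', a_k\rangle$ handles the other; since $e - e' \neq 0$ and $a_k$ is a "random" vector, $\langle e-e', a_k\rangle$ is (nearly) an unbiased bit. The only subtlety is that the $a_k$'s are not fully independent uniform vectors but rather a uniformly random linearly independent $l$-tuple. So I would either (a) first carry out the computation assuming $a_1,\dots,a_l$ are i.i.d.\ uniform in $\F_2^m$, getting clean pairwise independence and the bound $\Pr[S = \{e\}] \geq 2^{-l}(1 - t \cdot 2^{-l}) = 2^{-l}(1 - 1/c)$, and then (b) account for the switch to a random independent tuple. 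For the switch, the standard argument is that a uniform i.i.d.\ $l$-tuple is linearly independent with probability $\prod_{j=0}^{l-1}(1 - 2^{j-m}) \geq 1 - 2^{l-m}$, and conditioning on linear independence changes any probability by at most a $(1 \pm O(2^{l-m}))$ factor; since $l = \log_2(ct)$ and $t \leq \frac{1}{100} 2^{m/2}$, we have $2^{l-m} = ct \cdot 2^{-m} \leq \frac{4}{100} 2^{-m/2}$, which is negligible.

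Putting the pieces together: $\Pr[S = \{e\}] \geq 2^{-l}\big(1 - (t-1)2^{-l}\big) \cdot (1 - o(1))$. Using $2^{-l} = 1/(ct)$ with $c < 4$, the leading factor is $\frac{1}{ct}(1 - \frac{t-1}{ct})(1-o(1)) \geq \frac{1}{ct}(1 - \frac{1}{c})(1-o(1))$. Since $2 \leq c < 4$, the worst case is $c$ near $4$, giving roughly $\frac{1}{4t} \cdot \frac{3}{4}(1-o(1)) = \frac{3}{16t}(1-o(1)) > \frac{1}{7t}$, and for $c$ near $2$ it is even larger (about $\frac{1}{4t}$); so $\frac{1}{7t}$ holds with room to spare once the $o(1)$ from the linear-independence conditioning is absorbed. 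The main obstacle is purely bookkeeping: carefully justifying that conditioning the i.i.d.\ analysis on the event "$a_1,\dots,a_l$ linearly independent" only perturbs the relevant probabilities by a factor $1 \pm O(2^{l-m})$, and checking that the resulting constant still clears $1/7$ across the whole range $2 \leq c < 4$. I would handle this by tracking the conditioning factor explicitly rather than hiding it, and by noting that since $l$ is forced to be an integer, $c = 2^l/t$ automatically lies in $[2,4)$, so no case is vacuous.
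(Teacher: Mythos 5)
Your proposal is correct and follows essentially the same route as the paper: work first with i.i.d.\ uniform $a_k$'s to get clean pairwise independence, apply a union bound to show that with probability at least $2^{-l}\bigl(1 - (t-1)2^{-l}\bigr) = \frac{1}{t}\bigl(\frac{1}{c} - \frac{1}{c^2}\bigr)$ the point $e$ is isolated, and then pay a small penalty of order $\Pr[\text{$a_1,\dots,a_l$ dependent}] = O(ct/2^m)$ to pass to the uniform linearly-independent tuple. The only cosmetic difference is that the paper handles the conditioning additively (subtracting $\Pr[\overline{I}]$ via a total-probability estimate, yielding $\frac{1}{6t} - \frac{ct}{2^m} \geq \frac{1}{7t}$) rather than phrasing it as a $\bigl(1 \pm O(2^{l-m})\bigr)$ multiplicative factor; the additive form is slightly cleaner to justify, but under the hypothesis $t \leq \frac{1}{100}2^{m/2}$ both give the same conclusion.
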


So, if we restrict $100 t\log t$ times, then the probability that some point is never isolated is at most $t\left(1 - \frac{1}{7t} \right)^{100 t \log t} \leq 0.001$. What remains is to ensure that we have \emph{all} the polynomials of the given degree that vanish at that point. This is shown by obtaining this set of polynomials after every affine restriction.

\subsection{Restricting the points to a hyperplane.}
\label{sec:restrict}

Here, we just analyze the case when restricting to $X_m = 0$. Further restrictions are analyzed in exactly the same way.

We have $E_1$, the set of all $z \in \F_2^{m-1}$ such that $(z,0) \in E$.
Let $\hat{E_1}$ be the set of all $z \in \F_2^{m-1}$ such that $(z,1) \in E$.
We also have $V_1$, the space of all $n-1$ variate polynomials that vanish on $E_1$. The following lemma shows that $V_1$ can be found easily from $V$.

\begin{lemma}
\label{lem:polyrestrict}
We have that 
\begin{align*}
V_1 = & ~ \{P(X_1, \ldots, X_{m-1}, 0) \in \F_2[X_1, \ldots, X_{m-1}] \suchthat \\
& ~ P(X_1, \ldots, X_m) \in V\}.
\end{align*}
\end{lemma}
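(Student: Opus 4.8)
\textbf{Proof proposal for Lemma~\ref{lem:polyrestrict}.}

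The plan is to prove the two inclusions separately, using the fact that both $V$ and $V_1$ are, by definition, the spaces of \emph{all} reduced polynomials of the respective degrees vanishing on $E$ and $E_1$ respectively. First I would set up notation: write $\pi$ for the substitution map $P(X_1,\dots,X_m) \mapsto P(X_1,\dots,X_{m-1},0)$, which sends a reduced polynomial of degree $\le r+1$ in $m$ variables to a reduced polynomial of degree $\le r+1$ in $m-1$ variables (substituting $X_m=0$ and then reducing can only decrease the degree, and it already kills any monomial containing $X_m$). Denote the right-hand side of the claimed identity by $\pi(V)$.

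For the inclusion $\pi(V) \subseteq V_1$: take $P \in V$, so $P$ vanishes on all of $E$. For any $z \in E_1$ we have $(z,0) \in E$, hence $P(z,0) = 0$, i.e.\ $\pi(P)(z) = 0$. Thus $\pi(P)$ is a reduced polynomial of degree $\le r+1$ in $m-1$ variables vanishing on $E_1$, so $\pi(P) \in V_1$ since $V_1$ is defined to contain \emph{all} such polynomials. For the reverse inclusion $V_1 \subseteq \pi(V)$: take $Q(X_1,\dots,X_{m-1}) \in V_1$, so $Q$ vanishes on $E_1$ and has degree $\le r+1$. View $Q$ as a polynomial $\tilde Q$ in the $m$ variables $X_1,\dots,X_m$ that simply does not involve $X_m$; it still has degree $\le r+1$. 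Then $\tilde Q$ vanishes on every point of $E$ of the form $(z,0)$ with $z \in E_1$; the issue is the points of $E$ of the form $(w,1)$ with $w \in \hat E_1$, where $\tilde Q$ need not vanish. The fix is the standard one: I would find a polynomial that corrects $\tilde Q$ on the $X_m=1$ part while not disturbing the $X_m=0$ part, and keep the degree under control. Concretely, among the points $\{(w,1) : w \in \hat E_1\}$, property $U_r$ for $E$ guarantees (via Claim~\ref{lem:restrank}-type reasoning applied to the restriction $X_m = 1$, which is itself an affine restriction) that the relevant evaluation vectors are linearly independent, so there exists a reduced polynomial $R(X_1,\dots,X_{m-1})$ of degree $\le r$ with $R(w) = \tilde Q(w,1)$ for every $w \in \hat E_1$ — this is just interpolation, solvable because the matrix $\operatorname{mat}(\hat E_1^{\otimes \le r})$ has full row rank. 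Now set $P(X_1,\dots,X_m) := \tilde Q(X_1,\dots,X_{m-1}) - X_m \cdot R(X_1,\dots,X_{m-1})$, a reduced polynomial of degree $\le \max(r+1, r+1) = r+1$. On points $(z,0)$ with $z \in E_1$ the $X_m$ term vanishes and $P(z,0) = \tilde Q(z) = Q(z) = 0$; on points $(w,1)$ with $w \in \hat E_1$ we get $P(w,1) = \tilde Q(w,1) - R(w) = 0$. Hence $P$ vanishes on all of $E$, so $P \in V$, and $\pi(P) = \tilde Q(X_1,\dots,X_{m-1}) - 0 = Q$, which shows $Q \in \pi(V)$.

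I expect the main obstacle to be the degree bookkeeping in the reverse inclusion: one must check that the correcting term $X_m \cdot R$ stays within degree $r+1$, which forces $R$ to have degree $\le r$, and this is exactly where property $U_r$ (rather than $U_{r+1}$) is the right hypothesis — the interpolation problem $R(w) = \tilde Q(w,1)$ over $\hat E_1$ is solvable in degree $\le r$ precisely because the vectors in $\hat E_1^{\otimes \le r}$ are linearly independent, which follows from $E$ satisfying $U_r$ by the same argument as Claim~\ref{lem:restrank}. A secondary point to be careful about is that "reduced" (multilinear) representatives are unique, so the substitution map and the embedding $Q \mapsto \tilde Q$ are well-defined and the equality $\pi(P) = Q$ holds as an identity of reduced polynomials, not merely up to reduction. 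Once these two points are handled the lemma follows immediately.
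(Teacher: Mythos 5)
Your proof is correct and takes essentially the same route as the paper's: the easy direction by direct substitution, and the reverse direction by extending $Q$ to $m$ variables and adding a correcting term $X_m \cdot R$ with $\deg R \le r$, where $R$ is obtained by interpolating on $\hat E_1$, solvable because $\hat E_1$ inherits property $U_r$ from $E$. (Your write-up is in fact slightly cleaner; the paper's displayed linear system states the constraint over $z \in E_1$ where it should read $z \in \hat E_1$, a typo you implicitly avoid.)
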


\begin{proof}
Fix a polynomial $P(X_1, \ldots, X_m) \in V$. We first show that the polynomial in $n-1$ variables \linebreak $Q(X_1, \ldots, X_{m-1}) = P(X_1, \ldots, X_{m-1}, 0)$ lies in $V_1$. But this is obvious: since $P(X_1, \ldots, X_m) \in V$, we know that $P(y) = 0$ for all $y \in E$. Thus
for any $z \in E_1$, $P(z,0) = 0$. Thus $Q(z) = 0$.

For the other direction, suppose $Q(X_1, \ldots, X_{m-1}) \in V_1$. We need to show that there exists
some $P(X_1, \ldots, X_m) \in V$ such that $Q(X_1, \ldots, X_{m-1}) = P(X_1, \ldots, X_{m-1}, 0)$. We will show that there is some polynomial $P'(X_1, \ldots, X_{m-1}) \in \F[X_1, \ldots , X_{m-1}]$ of degree  at most $r$ such that \[Q(X_1, \ldots, X_{m-1}) + X_m\cdot P'(X_1, \ldots, X_{m-1}) \in V.\] Towards this, let $(a_M)_{M \in \mathcal{M}_r^{m-1}}$ be indeterminates, and 
let $P'(X_1, \ldots, X_{m-1})$ be given by:

\[
P'(X_1, \ldots, X_{m-1}) = \sum_{M \in \mathcal{M}_r^{m-1}} a_M M(X_1, \ldots X_{m-1}).
\]

We set up a system of linear equations on the $a_M$:

\begin{align*}
Q(z) + P'(z) = 0 & \text{\hspace{10mm} For every $z \in E_1$}
\end{align*}

We claim that there exists a solution $(a_M^*)_{M \in \mathcal{M}_r^{m-1}}$ to this system of equations. This is because $\hat{E_1}$ satisfies property $U_r$. This follows from the fact that for every $e \in \hat{E_1}$, i.e., for any $(e,1) \in E$, the entries of $(e,1)^{\otimes \leq r}$ are the same as the entries in $e^{\otimes \leq r}$ with some entries repeated, so any linear dependency among the columns of $\hat{E_1}^{\otimes \leq r}$ corresponds to a linear dependency in the columns of $E^{\otimes \leq r})$.

Finally, it remains to check that for every such $P'$ (actually, just some $P'$ is enough), we have that \[Q(X_1,\ldots, X_{m-1}) + X_mP'(X_1,\ldots, X_{m-1}) \in V,\] i.e., $Q(X_1,\ldots, X_{m-1}) + X_mP'(X_1,\ldots, X_{m-1})$ vanishes on $E$. But this is obvious: the case when $X_m = 0$ is taken care of by the fact that $Q \in \tilde{S}$, and the case when $X_m = 1$ is handled by the fact that $P'$ is a solution to our system of equations.

\end{proof}

In the above lemma, all the polynomials in $V$ are \emph{reduced}, i.e., have degree in each variable at most one. When we apply an invertible affine transformation on the variables, we have to ensure that all the polynomials are reduced. However, this is again easy, as it suffices to reduce the basis polynomials of the space. Henceforth, for a set of polynomial $P \in \F_2[X_1\ldots, X_n]$, we shall denote $\operatorname{reduce}(P)$ to be polynomial obtained after reducing $P$.

And finally, we present the full algorithm

\begin{algorithm}
\label{alg:main}

\begin{algorithmic}
\\
\Procedure{FindRoots}{$V$}
\State $t \gets \operatorname{codim}(V)$
\State \textbf{Initialize} $E \gets \emptyset$ \Comment{error set}
\For{$100t \log t$ iterations}
\State $M \sim GL(m,\F_2)$, $b \sim \F_2^m$
\For{$P \in V$}
\State $P(X) \gets \operatorname{reduce}( P(MX + b))$ \Comment{affine transformation}
\EndFor
\State  $V_l \gets \{P(X_1,\ldots, X_{n-l}, 0, \ldots, 0) \suchthat P(X_1,\ldots, X_{n}) \in V\}$
\State $e \gets $\Call{FindUniqueRoot}{$V$} 
\If{$e \neq \bot$}
\State $E \gets E \cup \{e\}$
\EndIf
\EndFor
\State \textbf{return} $E$
\EndProcedure
\end{algorithmic}
\end{algorithm}

We do $100 t \log t$ iterations, and in each step the most expensive operation is \textsc{FindUniqueRoot}, which takes time $O(m^{r \omega + 2})$, since it is essentially equivalent to checking if s given vector is in the span of some set of $\leq m^{r}$ vectors . Therefore, the total running time is $O\left(m^{(\omega + 1)r + 4}\right)$

\subsection{A note on derandomization.}
\label{sec:derandomization}

In this section, we show how to run the previous algorithm in a derandomized way. The key tool is that we can count the number of common roots of the space via Equation~\ref{eqn:number} for any instance. So, this suggests a natural approach: we try to restrict variables one by one to $0$ or $1$, and then finding the corresponding space of polynomials by Lemma~\ref{lem:polyrestrict}, only ensuring that the number of common roots after restricting is still nonzero. 

\begin{algorithm}
\label{alg:detmain}

\begin{algorithmic}
\\
\Procedure{DetFindRoots}{$V$}
\State  $V_0 \gets \{P(X_1,\ldots, X_{n-1}, 0) \suchthat P(X_1,\ldots, X_{n}) \in V\}$
\State  $V_1 \gets \{P(X_1,\ldots, X_{n-1}, 1) \suchthat P(X_1,\ldots, X_{n}) \in V\}$
\If{$\operatorname{codim}(V_0) \neq 0$}
\State $E_0 \gets \Call{DetFindRoots}{V_0$}
\Else
\State $E_0 \gets \emptyset$
\EndIf
\If{$\operatorname{codim}(V_1) \neq 0$}
\State $E_1 \gets \Call{DetFindRoots}{V_1$}
\Else
\State $E_1 \gets \emptyset$
\EndIf
\State \textbf{return} $E_0 \cup E_1$
\EndProcedure
\end{algorithmic}
\end{algorithm}

To find the running time, we utilize the following recurrence:
\\
\begin{align*}
T(m,|E|)  \leq &~ T(m-1, |E_0|) + T(m-1, |E_1|) \\
& + \binom{m}{\leq r+1}^{\omega},
\end{align*}
where $|E_0| + |E_1| = |E|$. This gives a running time bound of $O\left(m^{(\omega+1)r + 2}\right)$.

\section{Extension to other small fields.}

The algorithm given above is easily extended to other fields of small order. The reduction of the syndrome decoding problem to finding roots of a space of low degree polynomials, and the isolation lemma can be adapted with almost no change at all. We will only reproduce the result of Section~\ref{sec:restrict}. We do it for $\F_p$ and show that we can recover the whole space of polynomials that vanish on a set of points after one restriction $X_m = 0$. We carry over the notation too. Let $E_1 \defeq \{e \suchthat (e,0) \in E\}$. Let $\hat{E}_1 \defeq E \setminus \{(e,0) \suchthat e \in E_1\} $.

\begin{lemma}
\[
V_1 =  \{P(X_1, \ldots, X_{m-1}, 0) \in \F_p[X_1, \ldots, X_{m-1}] \suchthat P(X_1, \ldots, X_m) \in V\}
\]
\end{lemma}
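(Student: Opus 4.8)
The plan is to mimic the proof of Lemma~\ref{lem:polyrestrict} almost verbatim, replacing $\F_2$ by $\F_p$ and being slightly more careful about the degree of the ``correction'' polynomial, since over $\F_p$ a variable $X_m$ is no longer equal to $X_m^2$, so the relevant algebra is just ordinary polynomial division by $X_m$ rather than the reduction trick. Concretely, I would first establish the easy inclusion: if $P(X_1,\ldots,X_m)\in V$, then $P$ vanishes on all of $E$, hence in particular $P(z,0)=0$ for every $z$ with $(z,0)\in E$, i.e. for every $z\in E_1$; thus $Q(X_1,\ldots,X_{m-1}):=P(X_1,\ldots,X_{m-1},0)$ lies in $V_1$. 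This direction uses nothing about the field.

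For the reverse inclusion, fix $Q\in V_1$, so $Q$ has degree $\le r+1$ and vanishes on $E_1$. I want to find $P\in V$ with $P(X_1,\ldots,X_{m-1},0)=Q$. Write $P = Q(X_1,\ldots,X_{m-1}) + X_m\cdot P'(X_1,\ldots,X_{m-1})$ for an unknown polynomial $P'$ of degree $\le r$; any such $P$ restricts correctly to $X_m=0$, and has degree $\le r+1$. (Here I should note that over $\F_p$ one might a priori want $P$ to depend on $X_m$ through higher powers $X_m^j$, but since $E$ lives in $\F_p^m$ we may assume all polynomials are reduced to individual degree $\le p-1$ in each variable; and in fact a degree-$1$-in-$X_m$ correction already suffices, exactly as in the $\F_2$ case, because we only need $P$ to vanish on the $\le p$ affine fibers above each point of $\F_2^{m-1}$ that actually occur in $E$ — but in our setting $E\subseteq \F_p^m$, so I should instead let $P' $ range over polynomials and impose one linear condition per point of $\hat E_1$.) The correct statement for general $\F_p$: let $\hat E_1 = E\setminus\{(e,0):e\in E_1\}$ be the points of $E$ with nonzero last coordinate, and for each $(e,a)\in \hat E_1$ impose the linear equation $Q(e) + a\cdot P'(e) = 0$ in the coefficients of $P'$. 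Since $a\ne 0$ this is equivalent to $P'(e) = -a^{-1}Q(e)$.

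The key step — and the only place randomness/structure of $E$ enters — is showing this linear system in the coefficients $(a_M)_{M\in\mathcal M_r^{m-1}}$ of $P'$ is solvable. This holds because $\hat E_1$ satisfies property $U_r$: the evaluation vectors $\{e^{\otimes\le r} : (e,a)\in \hat E_1\}$ are linearly independent, since $(e,a)^{\otimes\le r}$ has entries that are the entries of $e^{\otimes \le r}$ (over $\F_2$; over $\F_p$ the monomials involving the last coordinate give entries that are scalar multiples $a^j$ of entries of $e^{\otimes\le r}$, but one can still project onto the coordinates not involving $X_m$ to recover a copy of $\operatorname{mat}(\hat E_1^{\otimes\le r})$), so any linear dependence among rows of $\operatorname{mat}(\hat E_1^{\otimes\le r})$ lifts to one among rows of $\operatorname{mat}(E^{\otimes\le r})$, contradicting property $U_r$. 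Independence of those rows means the map $P'\mapsto (P'(e))_{(e,a)\in\hat E_1}$ is surjective onto $\F_p^{|\hat E_1|}$, so a solution $(a_M^*)$ exists. Finally I would check the resulting $P = Q + X_m P'$ indeed lies in $V$, i.e. vanishes on all of $E$: on $(e,0)\in E$ we get $P(e,0)=Q(e)=0$ since $e\in E_1$ and $Q\in V_1$; on $(e,a)\in\hat E_1$ we get $P(e,a) = Q(e) + a\,P'(e) = 0$ by construction of $P'$. Since $P$ has degree $\le r+1$ and is reduced (reduce it if necessary, which doesn't change its evaluations), $P\in V$, completing the reverse inclusion.

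The main obstacle I anticipate is purely bookkeeping: over $\F_p$ one must be careful that ``reduced'' means individual degree $\le p-1$ rather than $\le 1$, that the single linear correction term $X_m P'(X_1,\ldots,X_{m-1})$ (of $X_m$-degree one) is still enough even though a fiber over a point can contain up to $p$ elements of $E$ — it is enough because we impose exactly one equation per point of $\hat E_1$ and property $U_r$ gives us enough freedom in $P'$ — and that the ``entries repeated / scalar multiples'' argument for why $\hat E_1$ inherits property $U_r$ goes through. None of these is deep, but the write-up should state them carefully so the lemma is not merely asserted to follow ``with almost no change.''
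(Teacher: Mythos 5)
There is a genuine gap. Over $\F_p$ with $p>2$, a single point $e\in\F_p^{m-1}$ can have several lifts $(e,a)$ with distinct nonzero $a$ lying in $E$ simultaneously; property $U_r$ does not forbid this, since $(e,a)^{\otimes\le r}$ and $(e,a')^{\otimes\le r}$ already differ in the coordinates indexed by monomials involving $X_m$. Your correction term $X_m P'(X_1,\ldots,X_{m-1})$, being linear in $X_m$, cannot make $P=Q+X_mP'$ vanish at two such lifts: the equations $Q(e)+aP'(e)=0$ and $Q(e)+a'P'(e)=0$ with $a\ne a'$ force $P'(e)=0$ and $Q(e)=0$, but $Q$ is only known to vanish on $E_1$ (the fiber over $0$), not at $e$. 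So the linear system you write down can be infeasible. Your surjectivity argument has the same hole, and also a logical reversal: you claim the vectors $e^{\otimes\le r}$ for $(e,a)\in\hat E_1$ are linearly independent because one can ``project onto the coordinates not involving $X_m$,'' but linear independence of the full vectors $(e,a)^{\otimes\le r}$ does not imply linear independence of a coordinate projection of them. In fact when the same $e$ appears twice, the projected rows are literally equal.

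This is precisely why the paper's proof uses a correction of the form $\sum_{i=1}^{p-1}X_m^i A_i(X_1,\ldots,X_{m-1})$ rather than a single term $X_m P'$. Allowing all powers $X_m^1,\ldots,X_m^{p-1}$ gives, for each point $(e,a)\in\hat E_1$, a row of the coefficient matrix whose entries are $a^{j}M'(e)$ for varying $j\ge 1$ and monomials $M'$ in the first $m-1$ variables --- i.e., a row drawn from the coordinates of $(e,a)^{\otimes\le r}$ that do involve $X_m$. Repeated $e$'s with distinct $a$'s then give genuinely different rows, and full row rank can be inherited from $U_r$ for $E$ (via $\hat E_1\subseteq E$). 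Over $\F_2$ these two approaches coincide because each fiber contains at most one point with nonzero last coordinate, which is exactly why the $\F_2$ argument looked like it would port over unchanged. To fix your write-up, replace the single unknown $P'$ by the tuple $A_1,\ldots,A_{p-1}$ with $\deg A_j\le r+1-j$, write one equation per point of $\hat E_1$, and argue solvability from the independence of the rows of $\operatorname{mat}(\hat E_1^{\otimes\le r})$ restricted to the $X_m$-involving coordinates.
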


\begin{proof}
As in the previous case, one direction is obvious. Let $P(X_1,\ldots, X_m) \in V$. For every point $(z,0) \in E$, we have that $P(X_1,\ldots, X_{m-1},0)$ vanishes at $z$.

For the other direction, again similar to the previous case, let $(a_M^{(i)})_{M \in \mathcal{M}_r^{m-1},i \in [p-1]}$ be indeterminates, let the polynomials in the indeterminates $A_1(X_1,\ldots X_{m-1}),\ldots, A_{p-1}(X_1,\ldots X_{m-1})$ be given by:

\begin{align*}
A_1(X_1,\ldots, X_{m-1})  & = \sum_{M \in \mathcal{M}_r^{m-1}}a_M^{(1)}M(X_1, \ldots, X_{m-1}) \\
&\vdots \\
A_{p-1}(X_1,\ldots, X_{m-1}) & = \sum_{M \in \mathcal{M}_{r - p + 1}^{m-1}}a_M^{(p-1)}M(X_1, \ldots, X_{m-1}).
\end{align*}

and consider the system of linear equations:

\begin{align*}
A_1(y^{(1)}) + \cdots + A_{p-1}(y^{(1)}) & = -Q(y^{(1)}) \\
& \text{for}~(y^{(1)}, 1) \in E \\
& \vdots \\
(p-1)A_1(y^{(p-1)}) + (p-1)^{p-1}A_{p-1}(y^{(p-1)}) & = -Q(y^{(p-1)}) \\
& \text{for}~(y^{(p-1)}, p-1) \in E
\end{align*}

Rearranging, we have:

\begin{align*}
A_1(y^{(1)}) + \cdots + A_{p-1}(y^{(1)}) & = -Q(y^{(1)}) \\
& \text{for}~(y^{(1)}, 1) \in E\\
& \vdots \\
A_1(y^{(p-1)}) + (p-1)^{p-2}A_{p-1}(y^{(p-1)}) & = -(p-1)^{-1}Q(y^{(p-1)}) \\
& \text{or}~f(y^{(p-1)}, p-1) \in E
\end{align*}

We claim that a solution exists, and therefore such a polynomial 

\[
Q(X_1,\ldots, X_{m-1}) + \sum_{i \in [p-1]}X_m^iA_{i}(X_1,\ldots, X_{m-1})
\]

vanishes on $E$, and has degree at most $r+1$ and therefore, must belong to $V$. Let us denote, for $i \in [p-1]$, $E_1^{(i)} \defeq \{e \suchthat (e,i) \in E\}$. Writing the coefficients on the L.H.S in matrix form, we get

\[
\left(
\begin{array}{ccc}
  \operatorname{mat}((E_1^{(1)})^{\otimes \leq r})  & \cdots & \operatorname{mat}((E_1^{(1)})^{\otimes \leq r - p + 1})  \\
   \vdots & \ddots & \vdots \\
  \operatorname{mat}((E_1^{(p-1)})^{\otimes \leq r})  & \cdots & (p-1)^{p-2} \operatorname{mat}((E_1^{(p-1)})^{\otimes \leq r - p + 1})  \\
\end{array}
\right)
\]

It is easy to see that the above matrix is constructed by dropping some repeated columns of $\operatorname{mat}((\hat{E}_1)^{\otimes \leq r})$, and therefore, is full rank.

\end{proof}

\section{Discussion and open problems.}

A very nice question of~\cite{ASW:15} is to determine whether Reed-Muller codes achieve capacity for the Binary Symmetric Channel. In the constant distance regime, this would amount to being able to correct $\Theta(m^{r-1})$ random errors in the Reed-Muller code $RM(m,m-r)$. This seems to be closely related to understanding when tensors have high tensor rank, a question of great interest in algebraic complexity theory (eg. see \cite{RAZ:13} and the references therein).

Another interesting problem comes from our second approach to this syndrome decoding problem. Although our algorithm works well over small fields, over large fields, it has a bad dependence on the field size. This mainly comes because when trying to isolate one point using a subspace. It would be interesting to have an algorithm whose running time grows polynomially in $\log p$ instead of $p$, where $p$ is the size of the field. More concretely is there a $\operatorname{poly}(m^r, \log p)$ algorithm for the following problem?

\begin{itemize}

\item \emph{Input:} The space $S$ of all polynomials in $m$ variables of degree at most $r+1$ over $\F_p$ which vanish on an (unknown) set $E$ of points that satisfy property $U_r$.

\item \emph{Output:} The set $E$.
\end{itemize}

\section{Acknowledgements.}

We would like to thank K.P.S.\ Bhaskara Rao, Ramprasad Saptarshi, Amir Shpilka, and Avi Wigderson for helpful discussions. The first author would like to thank Ben Lee Volk for the helpful discussions at an early stage of this work.

\bibliographystyle{alpha}

\bibliography{references}

\newcommand{\etalchar}[1]{$^{#1}$}
\begin{thebibliography}{AKK{\etalchar{+}}05}

\bibitem[AKK{\etalchar{+}}05]{AKKLR:05}
Noga Alon, Tali Kaufman, Michael Krivelevich, Simon Litsyn, and Dana Ron.
\newblock Testing reed-muller codes.
\newblock {\em {IEEE} Trans. Information Theory}, 51(11):4032--4039, 2005.

\bibitem[AS03]{AroraSudan:03}
Sanjeev Arora and Madhu Sudan.
\newblock Improved low-degree testing and its applications.
\newblock {\em Combinatorica}, 23(3):365--426, 2003.

\bibitem[ASW15]{ASW:15}
Emmanuel Abbe, Amir Shpilka, and Avi Wigderson.
\newblock Reed-muller codes for random erasures and errors.
\newblock {\em {IEEE} Trans. Information Theory}, 61(10):5229--5252, 2015.

\bibitem[Ber67]{BERLEKAMP:67}
E.~R. Berlekamp.
\newblock Factoring polynomials over finite fields.
\newblock {\em Bell System Tech. J.}, 46:1853--1859, 1967.

\bibitem[BKS{\etalchar{+}}10]{BKSSZ:10}
Arnab Bhattacharyya, Swastik Kopparty, Grant Schoenebeck, Madhu Sudan, and
  David Zuckerman.
\newblock Optimal testing of reed-muller codes.
\newblock In {\em 51th Annual {IEEE} Symposium on Foundations of Computer
  Science, {FOCS} 2010, October 23-26, 2010, Las Vegas, Nevada, {USA}}, pages
  488--497, 2010.

\bibitem[BLR93]{BLR:93}
Manuel Blum, Michael Luby, and Ronitt Rubinfeld.
\newblock Self-testing/correcting with applications to numerical problems.
\newblock {\em J. Comput. Syst. Sci.}, 47(3):549--595, 1993.

\bibitem[Blu15]{BLUM}
Avrim Blum.
\newblock Lecture notes in foundations of machine learning and data science,
  November 2015.

\bibitem[Dum17]{DUMER:17}
Ilya Dumer.
\newblock Recursive decoding and its performance for low-rate reed-muller
  codes.
\newblock {\em CoRR}, abs/1703.05306, 2017.

\bibitem[GL89]{GoldreichLevin:89}
Oded Goldreich and Leonid~A. Levin.
\newblock A hard-core predicate for all one-way functions.
\newblock In {\em Proceedings of the 21st Annual {ACM} Symposium on Theory of
  Computing, May 14-17, 1989, Seattle, Washigton, {USA}}, pages 25--32, 1989.

\bibitem[Har70]{HAR:70}
R.~A. Harshman.
\newblock {Foundations of the PARAFAC procedure: Models and conditions for an"
  explanatory" multi-modal factor analysis}.
\newblock {\em UCLA Working Papers in Phonetics}, 16(1):84, 1970.

\bibitem[KMSU15]{KMSU:15}
Shrinivas Kudekar, Marco Mondelli, Eren Sasoglu, and R{\"{u}}diger~L. Urbanke.
\newblock Reed-muller codes achieve capacity on the binary erasure channel
  under {MAP} decoding.
\newblock {\em CoRR}, abs/1505.05831, 2015.

\bibitem[KP15]{KP:15}
Santhosh Kumar and Henry~D. Pfister.
\newblock Reed-muller codes achieve capacity on erasure channels.
\newblock {\em CoRR}, abs/1505.05123, 2015.

\bibitem[LRA93]{LRA:93}
S.~E. Leurgans, R.~T. Ross, and R.~B. Abel.
\newblock A decomposition for three-way arrays.
\newblock {\em SIAM Journal on Matrix Analysis and Applications},
  14(4):1064--1083, 1993.

\bibitem[McC87]{TENSOR}
P.~McCullagh.
\newblock {\em Tensor methods in statistics}.
\newblock Monographs on statistics and applied probability. Chapman and Hall,
  1987.

\bibitem[Mul54]{MULLER:54}
D.~E. Muller.
\newblock Application of boolean algebra to switching circuit design and to
  error detection.
\newblock {\em Transactions of the I.R.E. Professional Group on Electronic
  Computers}, EC-3(3):6--12, Sept 1954.

\bibitem[Rao03]{ginversesbook}
KPS~Bhaskara Rao.
\newblock {\em Theory of generalized inverses over commutative rings},
  volume~17.
\newblock CRC Press, 2003.

\bibitem[Raz13]{RAZ:13}
Ran Raz.
\newblock Tensor-rank and lower bounds for arithmetic formulas.
\newblock {\em J. {ACM}}, 60(6):40:1--40:15, 2013.

\bibitem[Ree54]{REED:54}
I.~Reed.
\newblock A class of multiple-error-correcting codes and the decoding scheme.
\newblock {\em Transactions of the IRE Professional Group on Information
  Theory}, 4(4):38--49, September 1954.

\bibitem[RS96]{RubinfeldSudan:96}
Ronitt Rubinfeld and Madhu Sudan.
\newblock Robust characterizations of polynomials with applications to program
  testing.
\newblock {\em {SIAM} J. Comput.}, 25(2):252--271, 1996.

\bibitem[Sho94]{SHOUP94}
Victor Shoup.
\newblock Fast construction of irreducible polynomials over finite fields.
\newblock {\em J. Symb. Comput.}, 17(5):371--391, 1994.

\bibitem[SSV17]{SSV:15}
Ramprasad Saptharishi, Amir Shpilka, and Ben~Lee Volk.
\newblock Efficiently decoding reed-muller codes from random errors.
\newblock {\em {IEEE} Trans. Information Theory}, 63(4):1954--1960, 2017.

\bibitem[STV01]{STV:01}
Madhu Sudan, Luca Trevisan, and Salil~P. Vadhan.
\newblock Pseudorandom generators without the {XOR} lemma.
\newblock {\em J. Comput. Syst. Sci.}, 62(2):236--266, 2001.

\bibitem[Sud07]{SUDAN}
Madhu Sudan.
\newblock Lecture notes in advanced complexity theory, March 2007.

\bibitem[VV86]{VV:86}
Leslie~G. Valiant and Vijay~V. Vazirani.
\newblock {NP} is as easy as detecting unique solutions.
\newblock {\em Theor. Comput. Sci.}, 47(3):85--93, 1986.

\end{thebibliography}

\appendix
\section{The Valiant Vazirani isolation lemma.}
\label{sec:vvlemma}

In this section, we prove the exact statement that we use to isolate a point using a subspace. This approach for the proof may be found in, for example, \cite{SUDAN}.

\begin{lemma}[Valiant-Vazirani Lemma.]
\label{lem:appendixvv}
For integers $t$ and $m$ such that $t \leq \frac{1}{100}2^{m/2}$, let $l$ and $c$ be such that $l$ is an integer, and $l = \log_2 ct$ where $2 \leq c < 4$. Given $E \subset  \F_2^m$ such that $|E| = t$, let $a_1,\ldots,a_l$ be uniform among all sets of $l$ linearly independent vectors, and $b_1,\ldots, b_l$ be uniformly and independently chosen elements of $\F_2$. Let 
\[
S \defeq \{x \in E \suchthat \langle x, a_k \rangle = b_k~\forall k \in [l]\}.
\]
Then, for every $e \in E$,

\[
\Pr(S = \{e\}) \geq \frac{1}{7t}.
\]
\end{lemma}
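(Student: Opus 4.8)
The plan is to fix an arbitrary $e \in E$ and bound $\Pr(S = \{e\})$ by a union bound over the ``bad'' events: since $S = \{e\}$ holds exactly when $e \in S$ and no $x \in E \setminus \{e\}$ lies in $S$, we have
\[
\Pr(S = \{e\}) \;\geq\; \Pr(e \in S) \;-\; \sum_{x \in E \setminus \{e\}} \Pr\big(e \in S \text{ and } x \in S\big).
\]
It then suffices to evaluate the two kinds of probabilities on the right. For the first, note the $b_k$ are uniform and independent of the $a_k$; conditioning on any value of $(a_1,\ldots,a_l)$, the event $e \in S$ says exactly that $b_k = \langle e, a_k\rangle$ for all $k \in [l]$, which has probability $2^{-l}$. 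Hence $\Pr(e \in S) = 2^{-l} = \tfrac{1}{ct}$.

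For the second term, fix $x \neq e$ and set $v := e + x \neq 0$. The event ``$e \in S$ and $x \in S$'' forces $\langle v, a_k\rangle = 0$ for all $k$, and then (conditioning on the $a_k$'s as before) the constraints on $b$ cost a further factor $2^{-l}$, so
\[
\Pr\big(e \in S \text{ and } x \in S\big) = 2^{-l}\cdot \Pr_a\big(\langle v, a_k\rangle = 0 \ \forall k \in [l]\big).
\]
The key step is to show $\Pr_a(\langle v, a_k\rangle = 0 \ \forall k) \leq 2^{-l}$, and this is where the choice of the $a_k$ as a uniformly random \emph{linearly independent} tuple (rather than a uniformly random matrix) must be handled with care. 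I would argue by direct counting: the number of ordered linearly independent $l$-tuples in $\F_2^m$ is $\prod_{i=0}^{l-1}(2^m - 2^i)$, while the number of such tuples lying entirely in the hyperplane $v^\perp \cong \F_2^{m-1}$ is $\prod_{i=0}^{l-1}(2^{m-1} - 2^i)$; the desired probability is the ratio $\prod_{i=0}^{l-1}\frac{2^{m-1} - 2^i}{2^m - 2^i}$, and each factor is at most $\tfrac12$ because $2^{m-1} - 2^i \leq \tfrac12(2^m - 2^i)$. For this to be meaningful we need $l \leq m-1$, and this is exactly where the hypothesis $t \leq \tfrac{1}{100}2^{m/2}$ is used: it gives $l = \log_2(ct) \leq \log_2(4t) < 2 + m/2 - \log_2 100 < m/2$.

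Putting the pieces together yields $\Pr(S = \{e\}) \geq 2^{-l} - (t-1)2^{-2l} \geq 2^{-l}\big(1 - t\cdot 2^{-l}\big)$, and substituting $2^{-l} = \tfrac{1}{ct}$ gives $\Pr(S=\{e\}) \geq \tfrac{c-1}{c^2 t}$. On the interval $2 \leq c < 4$ the map $c \mapsto \tfrac{c-1}{c^2}$ is non-increasing (its derivative is $\tfrac{2-c}{c^3} \leq 0$), so $\tfrac{c-1}{c^2} > \tfrac{3}{16} > \tfrac17$, and therefore $\Pr(S = \{e\}) \geq \tfrac{1}{7t}$ with room to spare. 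I expect the only real subtlety to be the counting argument for linearly independent tuples together with the feasibility check $l \leq m-1$; the remaining steps are a routine union bound and the independence of the $b_k$'s.
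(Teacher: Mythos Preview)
Your argument is correct and follows the same inclusion--exclusion/union-bound skeleton as the paper's proof: bound $\Pr(S=\{e\}) \ge \Pr(e\in S) - \sum_{x\neq e}\Pr(e\in S,\ x\in S)$, compute $\Pr(e\in S)=2^{-l}$ from the independence of the $b_k$'s, and bound the pairwise term by $2^{-2l}$, arriving at $\tfrac{c-1}{c^2 t}\ge \tfrac{3}{16t}>\tfrac{1}{7t}$.

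The only genuine difference is how the ``uniformly random linearly independent $l$-tuple'' constraint is handled. The paper first samples $a_1,\ldots,a_l$ uniformly and \emph{independently}, runs the pairwise-independence calculation to get $\ge \tfrac{1}{6t}$, and then conditions on the event $I$ that the $a_k$ are linearly independent (paying $\Pr(\bar I)\le ct/2^m$, which is where the hypothesis $t\le \tfrac{1}{100}2^{m/2}$ enters). You instead work directly in the linearly independent distribution and bound $\Pr_a(\langle v,a_k\rangle=0\ \forall k)$ by the exact count $\prod_{i=0}^{l-1}\frac{2^{m-1}-2^i}{2^m-2^i}\le 2^{-l}$, using the hypothesis only to ensure $l<m/2\le m-1$. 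Your route is a bit cleaner (no conditioning loss from $\tfrac{1}{6t}$ to $\tfrac{1}{7t}$), while the paper's route avoids the explicit counting by appealing to the standard pairwise-independent hash family $x\mapsto (\langle a_k,x\rangle + b_k)_k$; either way the final bound is the same.
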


\begin{proof}
Assume that $a_1,\ldots, a_l$ are chosen uniformly and independently from $\F_2^m$. let $I$ denote the event $\{a_1,\ldots, a_l \text{ are linearly independent}\}$. We have $\Pr(I) \geq 1 - \frac{ct}{2^{m}}$. We have, for every $i \in [t]$, that $\Pr(\langle e_i, a_k \rangle = b_k) = \frac{1}{2}$. Moreover, we have the pairwise independence property that $\Pr(\langle e_i, a_k \rangle = b_k \land \langle e_j,a_k \rangle = b_k) = \frac{1}{4}$ for $i \neq j$.

With this in mind, let $E = \{e_1,\ldots, e_l\}$, and $\mathcal{E}_i$ denote the event $\{\langle e_i , a_k \rangle = 0 \suchthat k \in [l] \}$. We have that $\Pr(\mathcal{E}_i) = \frac{1}{2^l} = \frac{1}{ct}$, and $\Pr(\mathcal{E}_i \land \mathcal{E}_j) = \frac{1}{4^l} = \frac{1}{c^2t^2}$ for $i \neq j$. We have:

\[
\mathcal{E}_i \subseteq \left(\mathcal{E}_i \cap \left( \bigcap_{j \neq i} \overline{\mathcal{E}_j}\right)\right) \cup \left(\bigcup_{j \neq i}(\mathcal{E}_i \cap \overline{E}_j) \right).
\]

Therefore, by union bound, 

\[
\Pr(\mathcal{E}_i) \leq \Pr \left(\mathcal{E}_i \cap \left(\bigcap_{j \neq i}\overline{\mathcal{E}_j} \right) \right) + \sum_{j \neq i} \Pr(\mathcal{E}_i \cap \mathcal{E}_j),
\]

or

\[
\Pr \left(\mathcal{E}_i \cap \left(\bigcap_{j \neq i}\overline{\mathcal{E}_j} \right) \right) \geq \frac{1}{t}\left(\frac{1}{c} - \frac{1}{c^2} \right) \geq \frac{1}{6t}.
\]

And finally, by the law of total probability,

\begin{align*}
\Pr \left(\mathcal{E}_i \cap \left(\bigcap_{j \neq i}\overline{\mathcal{E}_j} \right)~ \Bigg | ~ I \right)  & \geq \frac{1}{6t} - \frac{ct}{2^m} \\
& \geq \frac{1}{7t}.
\end{align*}

\end{proof}

\end{document}